\documentclass[letterpaper, 10 pt, conference]{ieeeconf}  

\IEEEoverridecommandlockouts                              

\overrideIEEEmargins                                      

\usepackage{bm}
\usepackage{nicefrac}
\usepackage{amssymb}
\usepackage{amsmath}

\usepackage{amsthm}
\usepackage{cite}
\usepackage{xcolor}
\usepackage{graphicx}
\usepackage{url}
\usepackage{hyperref}
\usepackage{svg}
\usepackage{booktabs}
\usepackage{siunitx}
\usepackage{multirow}
\usepackage{colortbl,xcolor} 
\usepackage{threeparttable}  
\usepackage[caption=false,font=normalsize,labelfont=sf,textfont=sf]{subfig}

\usepackage{balance}
\newtheoremstyle{exampstyle}
  {3pt} 
  {3pt} 
  {\itshape} 
  {} 
  {\bfseries} 
  {.} 
  {.5em} 
  {} 

\theoremstyle{exampstyle} 
\newtheorem{definition}{Definition}
\newtheorem{lemma}{Lemma}
\newtheorem{theorem}{Theorem}

\newtheorem{assumption}{Assumption}

\newtheorem{problem}{Problem}

\theoremstyle{plain}

\definecolor{mumred}{RGB}{222,33,77}
\definecolor{mumgreen}{RGB}{0, 140, 0}
\definecolor{mumblue}{RGB}{0, 100, 222}
\definecolor{mumpurple}{RGB}{128, 0, 128}

\title{\LARGE \bf
Risk-Aware Robot Control in Dynamic Environments Using \\
Belief Control Barrier Functions
}

\author{Shaohang Han, Matti Vahs and Jana Tumova
	\thanks{This work was partially supported by the Wallenberg AI, Autonomous
		Systems and Software Program (WASP) funded by the Knut and Alice
		Wallenberg Foundation. }
	\thanks{The authors are with the Division of Robotics, Perception and Learning, School of Electrical Engineering and Computer Science, KTH Royal Institute of Technology, Stockholm, Sweden and also affiliated with Digital Futures. Mail addresses: {\{\tt\small shaohang, vahs, tumova\}}
		{\tt\small @kth.se}}%
}

\begin{document}

\maketitle
\thispagestyle{empty}
\pagestyle{empty}

\begin{abstract}

Ensuring safety for autonomous robots operating in dynamic environments can be challenging due to factors such as unmodeled dynamics, noisy sensor measurements, and partial observability. To account for these limitations, it is common to maintain a belief distribution over the true state. This belief could be a non-parametric, sample-based representation to capture uncertainty more flexibly. In this paper, we propose a novel form of Belief Control Barrier Functions (BCBFs) specifically designed to ensure safety in dynamic environments under stochastic dynamics and a sample-based belief about the environment state. Our approach incorporates provable concentration bounds on tail risk measures into BCBFs, effectively addressing possible multimodal and skewed belief distributions represented by samples. Moreover, the proposed method demonstrates robustness against distributional shifts up to a predefined bound. We validate the effectiveness and real-time performance (approximately \SI{1}{\kilo\hertz}) of the proposed method through two simulated underwater robotic applications: object tracking and dynamic collision avoidance. 

\end{abstract}

\section{Introduction}

When deploying autonomous robots in real-world settings, it is crucial to ensure they meet safety specifications; this is where safety-critical control comes into play. A popular approach in this domain is the use of Control Barrier Functions (CBFs), which effectively synthesize safe control inputs via CBF-based quadratic programs (CBF-QPs) \cite{ames2019control}. Due to their computational efficiency, CBF-QPs are widely employed to ensure safety in robotics \cite{wang2022safety, du2023reinforcement,xu2025learning, zhao2023stable,9981593}. Many works have extended the CBF-based framework to compensate for the uncertain dynamics in the real world \cite{clark2021control,black2023safety,singletary2022safe,lederer2024safe,cohen2022robust}. Specifically, stochastic CBFs (SCBFs) ensure system safety with probability one under stochastic dynamics represented by stochastic differential equations (SDEs) \cite{clark2021control}. Additionally, the authors in \cite{black2023safety} propose risk-aware CBFs that bound the risk of the stochastic system becoming unsafe. 

However, in extreme environments such as underwater settings, noisy sensor measurements, partial observability, and uncertain dynamics can also make it challenging to determine the true system state. To account for this, robotic software stacks commonly employ a perception module that provides the system’s \emph{belief}, which is a probability distribution over possible states \cite{thrun2006probabilistic}. As an example, consider an autonomous underwater vehicle (AUV) operating in a dynamic environment in the presence of a moving object, such as another AUV. Depending on the task, the AUV may be required to keep this object within its field of view (FoV) for continuous tracking or ensure collision avoidance to maintain safe operation. Unfortunately, the measurements could be noisy due to occlusions, reflections, or temporary low visibility, as exemplified in Fig.~\ref{fig:FirstPage}.
In such situations, the belief is commonly represented as a set of finite state samples. This non-parametric form can capture distributions that could be multimodal and skewed \cite{chun20243d,dorner2024smooth,masmitja2023dynamic}.

To incorporate an uncertain state within the CBF-based framework, existing work has assumed Gaussian beliefs \cite{vahs2023belief,wei2024confidence,li2023moving}, or bounded state estimation errors \cite{dean2021guaranteeing,agrawal2022safe,zhang2022control}. Yet, it is also important to consider beliefs that are unbounded and represented by finite samples. This introduces two key challenges: (i) quantifying uncertainty from finite samples, and (ii) integrating this uncertainty into CBF safety constraints. In parametric belief models (e.g., Gaussian distributions), uncertainty quantification is straightforward due to closed-form expressions for tail risk measures such as Value-at-Risk ($\mathrm{VaR}$) \cite[Example 2.14]{mcneil2015quantitative} and Conditional-Value-at-Risk ($\mathrm{CVaR}$) \cite[Example 2.18]{mcneil2015quantitative}. These risk measures are particularly effective for reasoning about tail events that may make the robot unsafe \cite{akella2024risk}. In contrast, sample-based beliefs preclude direct analytical computation, requiring alternative approaches to address these challenges.

\begin{figure}[t]
    \centering
    \includegraphics[width=0.8\columnwidth]{
    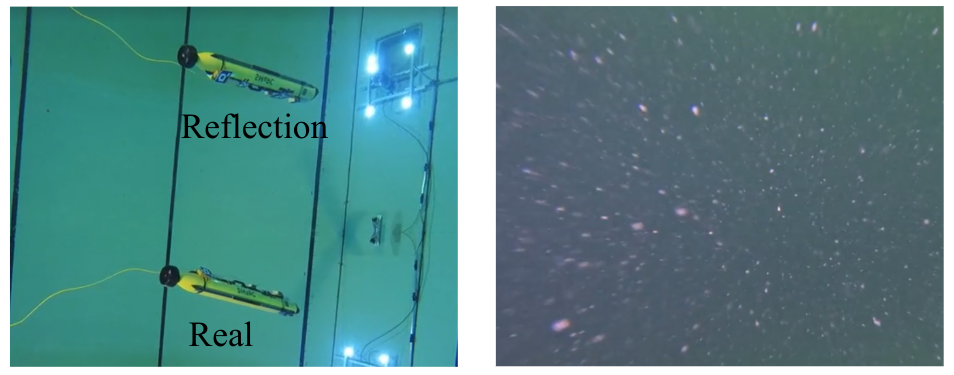} 
    \caption{\textbf{Left:} Reflections near the water surface can lead to ambiguous detections.
    \textbf{Right:} Underwater bubbles cause temporary low visibility.
    }
    \label{fig:FirstPage}
    \vspace{-0.5cm}
\end{figure}

In the literature, only a few works address sample-based beliefs in CBF-based frameworks. Recent works \cite{chriat2024wasserstein, long2024sensor} construct Wasserstein ambiguity sets from state samples and leverage distributionally robust optimization (DRO) to reformulate CBF constraints with probabilistic guarantees. By adopting $\mathrm{CVaR}$ as a risk measure, the authors derive tractable convex reformulations that preserve the convex optimization structure of the control synthesis problem. 
While this approach can be effective, it may become computationally expensive as the number of samples increases.
On the other hand, \cite{do2024probabilistically} proposes a scenario-based method to ensure probabilistic guarantees by satisfying a finite number of CBF constraints corresponding to the samples. However, only linear dynamics are considered in this approach. 
Our previous work \cite{vahs2024risk} proposed belief CBFs (BCBFs), constructed using the sample-based lower bound of $\mathrm{CVaR}$, to address localization uncertainty of the robot. 
In this work, we generalize the use of sample-based lower bounds of risk measures within BCBFs to handle beliefs about the environment, including moving objects. 

\textbf{Contribution:} In this paper, we propose a generalized formulation of BCBF for sample-based beliefs and stochastic dynamics, leveraging concentration bounds on two types of tail risk measures—$\mathrm{VaR}$, $\mathrm{CVaR}$—as established in \cite{vincent2024guarantees}. While these bounds have been shown to be effective for performance evaluation \cite{vincent2024guarantees}, we extend their application to safety-critical control synthesis within the CBF framework. Furthermore, the proposed BCBFs exhibit robustness to bounded distribution shifts, which may be resulted from model mismatches in Bayes filtering within the perception module. Finally, we demonstrate the effectiveness and computational efficiency of our method in two AUV applications: object tracking and dynamic collision avoidance.

\section{Preliminaries}
We consider stochastic dynamics for both the robot and the moving object, represented as stochastic differential equations (SDEs). The robot dynamical model is given by
\begin{align}
\label{eq:sde_ro}
    \mathrm{d}\bm{x} = \left(\bm{f}(\bm{x})+\bm{g}(\bm{x})\bm{u}\right)\mathrm{d}t + \bm{\sigma}(\bm{x})\mathrm{d}\bm{z}
\end{align}
where $\bm{x} \in \mathcal{X} \subseteq \mathbb{R}^{n_x}$ is the robot state, $\bm{u} \in \mathcal{U} \subseteq \mathbb{R}^m$ denotes the control input, the functions $\bm{f}:\mathbb{R}^{n_x}\mapsto\mathbb{R}^{n_x}$ and $\bm{g}:\mathbb{R}^{n_x}\mapsto\mathbb{R}^{n_x \times m}$, and $\bm{z} \in \mathbb{R}^{n_z}$ is a $n_z$-dimensional Brownian motion. The motion of the object is governed by
\begin{align}
\label{eq:sde_obj}
    \mathrm{d}\bm{o} = \bm{\xi}(\bm{o})\mathrm{d}t + \bm{d}(\bm{o})\mathrm{d}\bm{w},
\end{align}
where $\bm{o} \in \mathcal{S} \subseteq \mathbb{R}^{n_o}$ is the object state, $\bm{\xi}:\mathbb{R}^{n_o}\mapsto\mathbb{R}^{n_o}$ is the transition function. $\bm{w} \in \mathbb{R}^{n_w}$ is a $n_w$-dimensional Brownian motion. We assume the diffusion terms $\bm{\sigma}(\bm{x})$ and $\bm{d}(\bm{o})$ are globally Lipschitz non-degenerate diagonal matrices, and that the drift terms $\left(\bm{f}(\bm{x}) + \bm{g}(\bm{x})\bm{u}\right)$ and $\bm{\xi}(\bm{o})$ are locally Lipschitz functions that can have jumps. Under these assumptions,~(\ref{eq:sde_ro}) and~(\ref{eq:sde_obj}) admit unique global strong solutions \cite{leobacher2017strong}.

\subsection{Stochastic Control Barrier Functions}

Consider the SDEs defined in~(\ref{eq:sde_ro}) and~(\ref{eq:sde_obj}), a safe set ${\mathcal{C} \subseteq \mathbb{R}^{n_x} \times \mathbb{R}^{n_o}}$ is the closed zero super-level set of a twice continuously differentiable function $h:\mathbb{R}^{n_x} \times \mathbb{R}^{n_o} \mapsto \mathbb{R}$, which is defined as
\begin{equation}
    \begin{aligned}
        \mathcal{C} &:= \{ (\bm{x},\bm{o}) \in \mathcal{X} \times \mathcal{O} \mid h (\bm{x},\bm{o}) \geq 0 \}, \\
        \partial \mathcal{C} &:= \{ (\bm{x},\bm{o}) \in \mathcal{X} \times \mathcal{O} \mid h (\bm{x},\bm{o}) = 0 \}.
    \end{aligned}
    \label{eq:safe_set}
\end{equation}

\begin{definition}
    A safe set $\mathcal{C}$ is forward invariant with respect to the systems~(\ref{eq:sde_ro}) and~(\ref{eq:sde_obj}) if for every initial condition ${(\bm{x}_{t_0},\bm{o}_{t_0}) \in \mathcal{C}}$ it holds that $(\bm{x}_t,\bm{o}_{t}) \in \mathcal{C}, \forall t \geq t_0$ with probability 1.
\end{definition}

Stochastic CBFs, including reciprocal CBFs (RCBFs) and zeroing CBF (ZCBFs) forms, were proposed in \cite{clark2021control} ensuring forward invariance of the safe set $\mathcal{C}$. Then in \cite{so2023almost},  the authors provided refined formulations specifically for stochastic ZCBFs. Here, we express these formulations in a form that explicitly accounts for both the robot and the object, as formalized below.

\begin{definition}
\label{def:nszcbf}
    Given a safe set $\mathcal{C}$ defined by ~(\ref{eq:safe_set}), the function $h (\bm{x},\bm{o})$ serves as a zeroing stochastic CBF (ZSCBF) for system~(\ref{eq:sde_ro}) and~(\ref{eq:sde_obj}), if $\forall (\bm{x},\bm{o}) \in \mathcal{C}$, there exist an extended class-$\kappa$ function $\alpha$ and a control input $\bm{u}\in\mathcal{U}$ such that
    \begin{equation}
    \label{ineq:nszcbf}
        \begin{aligned}
        &\frac{\partial h}{\partial \bm{x}} \left(f(\bm{x}) + g(\bm{x})\bm{u}\right) 
        + \frac{1}{2} \operatorname{tr} 
        \left( \bm{\sigma}^\top \frac{\partial^2 h}{\partial \bm{x}^2} \bm{\sigma} \right)
        - \frac{\lVert \frac{\partial h}{\partial \bm{x}} \bm{\sigma}(\bm{x}) \rVert_2^2}{h} \\
        &+ \frac{\partial h}{\partial \bm{o}} \bm{\xi}(\bm{o})
        + \frac{1}{2} \operatorname{tr} 
        \left( \bm{d}^\top \frac{\partial^2 h}{\partial \bm{o}^2} \bm{d} \right)
        - \frac{\lVert \frac{\partial h}{\partial \bm{o}} \bm{d}(\bm{o}) \rVert_2^2}{h} \\
        &\geq - h(\bm{x},\bm{o})^2 \alpha\left(h(\bm{x},\bm{o})\right).
        \end{aligned}
    \end{equation}
\end{definition}

\begin{theorem}[Corollary 11, \cite{so2023almost}]
\label{th:nszcbf}
    Suppose there exist a function $h$ and a locally Lipschitz control input $\bm{u}$ satisfying Def.~\ref{def:nszcbf}, if $(\bm{x}_{t_0},\bm{o}_{t_0}) \in \mathcal{C}$, then ${\Pr \left[ (\bm{x}_t,\bm{o}_t) \in \mathcal{C}, \forall t \geq t_0 \right] = 1}$.
\end{theorem}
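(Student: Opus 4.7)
The plan is to reduce the joint forward-invariance statement for $(\bm{x}_t,\bm{o}_t)$ to a one-dimensional blow-up argument on the reciprocal barrier $V(\bm{x},\bm{o}) := 1/h(\bm{x},\bm{o})$. Since the two Brownian motions $\bm{z}$ and $\bm{w}$ driving~(\ref{eq:sde_ro}) and~(\ref{eq:sde_obj}) are independent, the augmented process $(\bm{x}_t,\bm{o}_t)$ is itself an Itô diffusion on $\mathbb{R}^{n_x+n_o}$, and its infinitesimal generator $\mathcal{L}$ acts on a $C^2$ function $\phi$ exactly as the sum of the two diffusion generators. I would first write out $\mathcal{L}h$ and observe that the first and fourth lines of~(\ref{ineq:nszcbf}) together with the trace terms reproduce $\mathcal{L}h$, while the two $-\|\cdot\|_2^2/h$ terms are exactly what one picks up from the second-order correction when composing with the map $y\mapsto 1/y$.

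Next I would apply Itô's lemma to $V=1/h$ on the set $\{h>0\}$. A short computation gives $\mathcal{L}V = -\mathcal{L}h/h^2 + (\|\partial_{\bm{x}} h\,\bm{\sigma}\|_2^2+\|\partial_{\bm{o}} h\,\bm{d}\|_2^2)/h^3$. Multiplying~(\ref{ineq:nszcbf}) by $-1/h^2$ and rearranging yields precisely $\mathcal{L}V \leq \alpha(h) = \alpha(1/V)$, so $V$ is a nonnegative Itô process whose drift is controlled by an extended class-$\kappa$ function of itself. The stochastic integral pieces $(\partial_{\bm{x}} h\,\bm{\sigma})\,\mathrm{d}\bm{z}$ and $(\partial_{\bm{o}} h\,\bm{d})\,\mathrm{d}\bm{w}$, once pushed through $1/h$, produce genuine local martingale terms which we can kill by the usual localization.

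With this in hand I would introduce the stopping times $\tau_n := \inf\{t\geq t_0 : h(\bm{x}_t,\bm{o}_t)\leq 1/n\}$ and $\tau := \lim_{n\to\infty}\tau_n = \inf\{t\geq t_0 : h(\bm{x}_t,\bm{o}_t)\leq 0\}$. On $[t_0,\tau_n]$ the process $V$ is bounded by $n$, so Dynkin's formula gives $\mathbb{E}[V_{t\wedge\tau_n}] \leq V_{t_0} + \int_{t_0}^{t} \mathbb{E}[\alpha(1/V_{s\wedge\tau_n})]\,\mathrm{d}s$, and a comparison with the scalar ODE $\dot{v}=\alpha(1/v)$ (whose solutions do not explode in finite time since $\alpha(0)=0$ and $\alpha$ is locally Lipschitz away from zero) shows that $V_{t\wedge\tau_n}$ is bounded in expectation uniformly in $n$ on any finite horizon. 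Markov's inequality then forces $\mathbb{P}(\tau_n \leq T) \to 0$, hence $\mathbb{P}(\tau\leq T)=0$ for every $T$, i.e.\ $h(\bm{x}_t,\bm{o}_t)>0$ for all $t\geq t_0$ almost surely, which together with the initial condition $(\bm{x}_{t_0},\bm{o}_{t_0})\in\mathcal{C}$ yields forward invariance with probability one.

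The main obstacle is the last step: making the comparison between the stochastic process $V_{t\wedge\tau_n}$ and the deterministic companion equation rigorous in the presence of the unbounded drift $\alpha(1/V)$ near $V=\infty$ and in the presence of the stochastic integral terms. The cleanest way to handle this, and the one I expect to mirror the argument behind Corollary 11 of~\cite{so2023almost}, is to invoke a Gronwall-type bound on $\mathbb{E}[\log V_{t\wedge\tau_n}]$ rather than on $\mathbb{E}[V_{t\wedge\tau_n}]$ directly, since $\log V = -\log h$ generically has better moment behavior near the boundary, and then transfer the non-explosion of $\log V$ back to non-negativity of $h$.
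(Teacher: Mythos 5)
Your reduction is essentially the same one the paper relies on: note that the paper does not prove Theorem~\ref{th:nszcbf} itself (it imports it as Corollary~11 of \cite{so2023almost}), but its in-house analogue --- the Appendix~\ref{ap:proof_nszcbf} proof of Lemma~\ref{lm:non_smooth} --- performs exactly your step of passing to the reciprocal $B=1/h$, observing that the Itô correction $h^{-3}\lVert\bm{\bar{\sigma}}\rVert_2^2$ is what the $-\lVert\cdot\rVert_2^2/h$ terms in~(\ref{ineq:nszcbf}) are designed to cancel, and multiplying by $-1/h^2$ to obtain $\mathcal{L}B\le\alpha(h)=\alpha(1/B)$, after which it delegates the non-explosion step to the cited reciprocal-CBF invariance result rather than carrying it out as you do with stopping times and Markov's inequality. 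One correction to your closing paragraph: the drift bound $\alpha(1/V)$ is \emph{not} problematic near $V=\infty$, since there $\alpha(1/V)\to\alpha(0)=0$; the place it can blow up is $V\to 0^{+}$, i.e.\ $h\to\infty$, so the needed extra ingredient is not a $\log V$ trick but a second localization to the event that $(\bm{x}_t,\bm{o}_t)$ stays in a compact set over the finite horizon (available because the SDEs admit unique global strong solutions), on which $\alpha(h)$ is bounded by a constant and your Dynkin--Markov argument closes as written. With that localization made explicit, your proof is a correct, self-contained version of the argument the paper outsources.
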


We present the following lemma, which will later be used in the derivation of our main theoretical result.

\begin{lemma}
\label{lm:non_smooth}
    Suppose $h(\bm{x},\bm{o})$ is continuous but non-smooth, the result in Theorem~\ref{th:nszcbf} still holds under the stochastic dynamics in~(\ref{eq:sde_ro}) and~(\ref{eq:sde_obj}). \vspace{-2.5mm}
\end{lemma}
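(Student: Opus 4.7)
The plan is to handle the non-smoothness by approximation: construct a sequence of $C^\infty$ functions $h_k$ that converge to $h$, apply Theorem~\ref{th:nszcbf} to each $h_k$, and then pass to the limit. The obstacle is that Definition~\ref{def:nszcbf} explicitly requires the second partial derivatives $\partial^2 h/\partial\bm{x}^2$ and $\partial^2 h/\partial\bm{o}^2$, which need not exist pointwise when $h$ is only continuous, so we must reinterpret the inequality in an approximating sense and show that forward invariance itself is stable under uniform convergence on compacts.

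Concretely, I would first mollify $h$ by convolution with a standard bump function: set $h_k := h * \phi_k$, where $\phi_k$ is supported in a ball of radius $\epsilon_k \downarrow 0$. Then $h_k \in C^\infty$ and $h_k \to h$ uniformly on compact subsets of $\mathcal{X} \times \mathcal{O}$, with analogous convergence of the first and second derivatives at points where $h$ is already smooth. The next step is to show that if $h$ satisfies (\ref{ineq:nszcbf}) in the generalized (a.e.\ differentiable) sense, then $h_k$ satisfies a slightly relaxed version of (\ref{ineq:nszcbf}) with a residual $\eta_k \to 0$; this uses the fact that convolution commutes with differentiation together with local Lipschitzness of $\bm{f}, \bm{g}, \bm{\xi}, \bm{\sigma}, \bm{d}$, and the regularity of $\alpha$.

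Applying Theorem~\ref{th:nszcbf} to each $h_k$ yields forward invariance of $\mathcal{C}_k := \{(\bm{x},\bm{o}) : h_k(\bm{x},\bm{o}) \geq 0\}$ with probability~$1$. Since $h_k \to h$ uniformly on compacts, $\mathcal{C}_k \to \mathcal{C}$ in the Hausdorff sense on bounded subsets. Combined with continuity of sample paths of the solutions to (\ref{eq:sde_ro}) and (\ref{eq:sde_obj}) (which is guaranteed by the existence of strong solutions), we can pass to the limit on the event $\{(\bm{x}_t,\bm{o}_t) \in \mathcal{C}_k \text{ for all } t \geq t_0\}$ to conclude $\Pr[(\bm{x}_t,\bm{o}_t) \in \mathcal{C},\ \forall t \geq t_0] = 1$.

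The main obstacle is the $-\lVert (\partial h/\partial\bm{x})\,\bm{\sigma}\rVert_2^{2}/h$ term in (\ref{ineq:nszcbf}), which is singular as $h \downarrow 0$ and can blow up on $\partial\mathcal{C}$; a naive mollification need not preserve the inequality near the boundary. I would handle this by introducing a stopping time $\tau_\eta := \inf\{t : h(\bm{x}_t,\bm{o}_t) \leq \eta\}$, working on the interior set $\{h \geq \eta\}$ where the singular term is bounded (so the relaxed inequality for $h_k$ genuinely holds), applying Theorem~\ref{th:nszcbf} on $[t_0,\tau_\eta)$, and then letting $\eta \downarrow 0$ using the almost-sure-continuity of paths to recover invariance on the full horizon. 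A secondary technicality is ensuring that the same locally Lipschitz $\bm{u}$ certifies the approximate condition uniformly in $k$, which again follows from the uniform-on-compacts convergence of $h_k$ and its derivatives together with continuity of $\alpha$.
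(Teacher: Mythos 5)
Your route (mollify $h$, apply Theorem~\ref{th:nszcbf} to each smooth approximant $h_k$, pass to the limit) is genuinely different from the paper's, which instead sets $B=1/h$, applies It\^{o}'s lemma to obtain a drift bound $\hat{\mu}\leq\alpha(h)$ for the reciprocal barrier, and then invokes an existing non-smooth stochastic CBF result \cite[Definition 3]{vahs2024non} to conclude forward invariance. Unfortunately your argument has a gap that I do not think is repairable in the form stated.

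The problem is the second-order It\^{o} term. The non-smooth barriers relevant here are order statistics, i.e.\ min-type combinations of smooth functions, so $h$ is locally Lipschitz but its distributional Hessian carries a singular, measure-valued part concentrated on the kink set, with a negative sign for minima (think of $-|x|$, whose second derivative is $-2\delta_0$). Mollification then produces $\partial^2 h_k/\partial\bm{x}^2 \sim -c/\epsilon_k$ on an $\epsilon_k$-neighborhood of the kinks, so the trace term $\tfrac{1}{2}\operatorname{tr}\bigl(\bm{\sigma}^\top (\partial^2 h_k/\partial\bm{x}^2)\,\bm{\sigma}\bigr)$ on the left-hand side of (\ref{ineq:nszcbf}) diverges to $-\infty$ there. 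Hence $h_k$ does not satisfy a ``relaxed inequality with residual $\eta_k\to 0$''; it violates (\ref{ineq:nszcbf}) by an amount that blows up as $k\to\infty$, and Theorem~\ref{th:nszcbf} cannot be applied to $h_k$ for large $k$. Relatedly, assuming (\ref{ineq:nszcbf}) only ``in the a.e.-differentiable sense'' is not a sufficient hypothesis: because $\bm{\sigma}$ and $\bm{d}$ are non-degenerate, the sample paths visit the measure-zero kink set, and the It\^{o} correction there enters through local time (It\^{o}--Tanaka / Meyer--It\^{o}); this contribution is exactly what an almost-everywhere statement misses and what the non-smooth stochastic CBF machinery cited by the paper is built to handle. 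There are also secondary, fixable issues: an initial condition with $h=0$ need not lie in $\mathcal{C}_k$ (you would have to work with a shifted $h_k+\lVert h_k-h\rVert_\infty$), and the Hausdorff-convergence claim for super-level sets is both unnecessary and false in general --- the correct limit passage is simply $h_k(\bm{x}_t,\bm{o}_t)\geq 0$ for all $t$ a.s.\ combined with uniform convergence on compacts. Your stopping-time treatment of the $1/h$ singularity is reasonable in spirit, but it does not address the Hessian blow-up, which occurs in the interior of $\mathcal{C}$ wherever the order statistics switch.
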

\begin{proof}
    The proof is provided in Appendix~\ref{ap:proof_nszcbf}.
\end{proof}


\subsection{Tail Risk Measures}
Consider a scalar random variable $Y \in \mathcal{Y} \subseteq \mathbb{R}$ with probability density function ${\mathrm{PDF}(y):=p(Y=y)}$ and cumulative density function ${\mathrm{CDF}(y):=\Pr[Y\leq y]}$, a risk measure is a function that maps $Y$ to a real value: $\rho(Y):\mathcal{Y}\mapsto\mathbb{R}$. We focus on two tail risk measures: Value-at-Risk ($\mathrm{VaR}$) and Conditional-Value-at-Risk ($\mathrm{CVaR}$), both defined at the lower quantile of the distribution, similar to \cite{singletary2022safe}.
\begin{definition}
The Value-at-Risk ($\mathrm{VaR}$) of a random variable 
$Y \in \mathbb{R}$ is the lower quantile at level $\tau$ where $\tau \in (0,1)$, formally defined as
$\mathrm{VaR}_{\tau}(Y) = \sup_{y} \left\{ y \mid \mathrm{CDF}(y) \leq \tau \right\}.$

Serving as a loose lower bound of $\mathrm{VaR}$, the Conditional-Value-at-Risk ($\mathrm{CVaR}$) of a random variable 
$Y \in \mathbb{R}$ at level $\tau\in(0,1]$ is
$\mathrm{CVaR}_{\tau}(Y) := \frac{1}{\tau} \int_{0}^{\tau} \mathrm{VaR}_{\nu}(Y) \, d\nu.$
\end{definition}

Both $\mathrm{VaR}$ and $\mathrm{CVaR}$ are closely related to chance constraints. Specifically, they satisfy the following inequality:
\begin{equation}
\label{eq:chance_constr}
    0 \leq \mathrm{CVaR}_{\tau}(Y) \leq \mathrm{VaR}_{\tau}(Y) \Rightarrow \Pr[Y\geq0] \geq 1-\tau,
\end{equation}
where $\tau$ is a user-defined risk level, typically set to small values such as $0.1$. Unlike the expected value $\mathbb{E}[Y]$, which computes the mean without accounting for the distribution's shape or tail behavior, tail risk measures $\mathrm{VaR}_{\tau}(Y)$ and $\mathrm{CVaR}_{\tau}(Y)$ are more effective in capturing tail events, making them particularly suitable for heavy-tailed, skewed, or multimodal distributions \cite{mcneil2015quantitative}.

\subsection{Concentration Bounds on Risk Measures}
The following lemma is adapted from \cite{vincent2024guarantees} by consolidating Theorems 1, 2, and 3 from \cite{vincent2024guarantees} for brevity. We have modified the original upper bounds from \cite{vincent2024guarantees} to serve as lower bounds in this work.

\begin{lemma}
\label{lm:lbs}

Let $Y^{(1)},\dots, Y^{(N)}$ be i.i.d. samples of a random variable $Y$, with ${\Pr[Y \geq Y_{\mathrm{lb}}]=1}$ for some finite $Y_{\mathrm{lb}}$. Let $\eta^{1:N}$ be the order statistics, i.e. $Y^{(1)},\dots, Y^{(N)}$ in descending order. $\eta^{k}$ denotes the $k$th smallest sample. 

\textbf{VaR Bound}: Suppose ${N\geq\lceil\ln\delta/\ln(1-\tau)\rceil}$ for ${\tau,\delta\in(0,1)}$. Let $k$ be the smallest index satisfying $\mathrm{Bin}\left(k-1;N,(1-\tau) \right)\geq1-\delta$, where $\mathrm{Bin}(k;m,p)$ denotes binomial $\mathrm{CDF}$ with $m$ trails, success probability $p$, $k$ successes. Then the lower bound on $\mathrm{VaR}$ is
\begin{equation}
\label{eq:lower_var}
    \underline{\mathrm{VaR}_\tau} := \eta^{k}.
\end{equation}

\textbf{CVaR Bound}: Suppose $N\geq\lceil-\frac{1}{2}\ln\delta/\tau^2\rceil$, for $\tau\in(0,1]$, $\delta \in (0, 0.5]$. Let $k$ be the smallest index satisfying ${\frac{k}{N}-\epsilon-1+\tau\geq0}$ with $\epsilon=\sqrt{-\mathrm{ln}\delta/2N} $. Then the lower bound on $\mathrm{CVaR}$ is
    \begin{equation}
        \begin{aligned}
            \underline{\mathrm{CVaR}_{\tau}} := \frac{1}{\tau} \bigg[ \epsilon b 
            &+ \left( \frac{k}{N} - \epsilon -1 + \tau \right) \eta^{k} \\
            &+ \frac{1}{N} \sum_{i=k+1}^{N} \eta^{i} \bigg].
        \end{aligned}
    \end{equation}
We can recover the lower bound on the expected value ${\underline{\mathbb{E}}=\underline{\mathrm{CVaR}_1}}$, when $\delta \in (0, 0.5]$ and $N\geq\lceil-\frac{1}{2}\ln\delta\rceil$. \vspace{0.5mm}

\textbf{Probabilistic Guarantee}: For $\rho_\tau$ being $\mathrm{VaR}_\tau$ and $\mathrm{CVaR}_\tau$, we have $\Pr[\underline{\rho_{\tau}}(Y^{(1)},\dots,Y^{(N)})\leq\rho_{\tau}(Y)]\geq1-\delta$.
\end{lemma}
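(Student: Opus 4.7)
The plan is to treat the $\mathrm{VaR}$ and $\mathrm{CVaR}$ bounds independently and to recover the probabilistic guarantee as a by-product of each concentration step. The argument for $\mathrm{VaR}$ reduces to a classical binomial identity for order statistics, while $\mathrm{CVaR}$ requires a one-sided Dvoretzky--Kiefer--Wolfowitz (DKW) inequality followed by piecewise quantile bookkeeping.

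For the $\mathrm{VaR}$ bound, I would exploit the link between order statistics and binomial counts. Since the $N$ samples are i.i.d.\ and each satisfies $\Pr[Y^{(i)} > \mathrm{VaR}_\tau(Y)] = 1-\tau$ by the definition of the lower $\tau$-quantile, the event $\{\eta^{k} \leq \mathrm{VaR}_\tau(Y)\}$ is precisely the event that at most $k-1$ of the samples strictly exceed $\mathrm{VaR}_\tau(Y)$, whose probability is $\mathrm{Bin}(k-1; N, 1-\tau)$. Taking $k$ as the smallest index satisfying $\mathrm{Bin}(k-1; N, 1-\tau) \geq 1-\delta$ therefore produces in one stroke both the tightest order-statistic lower bound $\eta^{k}$ and its $(1-\delta)$-guarantee. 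The sample-size condition $N \geq \lceil \ln\delta / \ln(1-\tau)\rceil$ is what ensures the binomial inequality is already met at $k = N$, so a valid index in $\{1,\dots,N\}$ always exists.

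For the $\mathrm{CVaR}$ bound, I would invoke the one-sided DKW inequality: with probability at least $1 - e^{-2N\epsilon^2} = 1-\delta$ for $\epsilon = \sqrt{-\ln\delta/(2N)}$, one has $F(y) \leq F_N(y) + \epsilon$ uniformly in $y$, where $F$ and $F_N$ are the true and empirical CDFs. Inverting yields the pointwise quantile inequality $F^{-1}(\nu) \geq F_N^{-1}(\nu - \epsilon)$ for $\nu > \epsilon$, with the convention $F_N^{-1}(u) = Y_{\mathrm{lb}}$ whenever $u \leq 0$. Substituting into $\mathrm{CVaR}_\tau(Y) = \tau^{-1}\int_0^\tau F^{-1}(\nu)\,d\nu$ and integrating the piecewise-constant shifted empirical quantile gives three contributions: a floor term $\epsilon\, Y_{\mathrm{lb}}$ from $\nu \in (0, \epsilon]$; a partial-bin term $(k/N - \epsilon - 1 + \tau)\,\eta^{k}$ at the upper edge of the domain, where $k$ is the smallest index with $k/N \geq 1 - \tau + \epsilon$; and the tail average $\frac{1}{N}\sum_{i=k+1}^{N}\eta^{i}$ over the remaining full bins. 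Dividing by $\tau$ recovers the stated $\underline{\mathrm{CVaR}_\tau}$. The condition $N \geq \lceil -\tfrac{1}{2}\ln\delta/\tau^2\rceil$ is equivalent to $\epsilon \leq \tau$, which is what keeps the middle-bin weight non-negative, and setting $\tau = 1$ recovers the expected-value case verbatim. The final probabilistic statement then follows immediately from the concentration step used in each construction.

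The main obstacle I anticipate is the quantile bookkeeping in the $\mathrm{CVaR}$ step: one must verify that the three chosen constants pointwise lower-bound $F_N^{-1}(\nu - \epsilon)$ on their respective subintervals, check that their total weight is exactly $\tau$ so that dividing by $\tau$ yields a genuine convex combination of $Y_{\mathrm{lb}}$ and sample values, and make sure the boundary cases (when $N(1-\tau+\epsilon)$ is an integer, or when $\tau = 1$) collapse consistently. The remaining pieces --- the binomial identification for $\mathrm{VaR}$ and the inversion of DKW --- are textbook.
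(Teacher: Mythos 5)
Your proposal is essentially correct, but note that the paper does not prove this lemma at all: it imports it wholesale from the cited reference (consolidating Theorems 1--3 of Vincent et al.), so what you have reconstructed is the standard derivation that the paper delegates to that citation --- the order-statistic/binomial identity for $\mathrm{VaR}$ and the one-sided DKW inequality plus piecewise integration of the shifted empirical quantile for $\mathrm{CVaR}$, including the correct identification of $b$ with $Y_{\mathrm{lb}}$ and the check that the three weights sum to $\tau$. One point you should make explicit: your key equivalence ``$\eta^{k}\leq \mathrm{VaR}_\tau(Y)$ iff at most $k-1$ samples strictly exceed $\mathrm{VaR}_\tau(Y)$'' holds only if $\eta^{k}$ is the $k$th \emph{largest} sample (descending-order indexing), and likewise your $\mathrm{CVaR}$ bookkeeping needs $\eta^{k+1},\dots,\eta^{N}$ to be the $N-k$ smallest samples; the lemma's statement is internally inconsistent here (it says both ``descending order'' and ``$k$th smallest''), and your argument silently resolves the ambiguity in the direction that makes the formulas true, which is fine but worth stating. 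Two smaller refinements: $\Pr[Y>\mathrm{VaR}_\tau(Y)]=1-\tau$ holds exactly only for continuous distributions --- in general you get $\leq 1-\tau$, and you should add the one-line stochastic-dominance step showing the binomial CDF bound still goes the right way; and the condition $N\geq\lceil-\tfrac{1}{2}\ln\delta/\tau^2\rceil$ gives $\epsilon\leq\tau$, which guarantees a valid $k\leq N$ exists, rather than merely making one bin weight non-negative. None of these affect the soundness of your approach.
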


Lemma~\ref{lm:lbs} gives the sample-based lower bounds on the risk measures, including $\mathrm{VaR}$, $\mathrm{CVaR}$ and $\mathbb{E}$, as illustrated in Fig.~\ref{fig:fig2a}.  In~\cite{vincent2024guarantees}, these bounds are used to verify the safety of a given controller. By contrast, we leverage these bounds to synthesize a safety controller. Specifically, we focus on the tail risk measures $\mathrm{VaR}$ and $\mathrm{CVaR}$, as they account for rare but critical events. We benchmark the performance of $\mathrm{VaR}$, $\mathrm{CVaR}$, and $\mathbb{E}$ by experiments presented later.

\begin{figure}[t]\captionsetup[subfigure]{font=scriptsize}
\centering
\subfloat[]{\includegraphics[width=0.48\columnwidth]{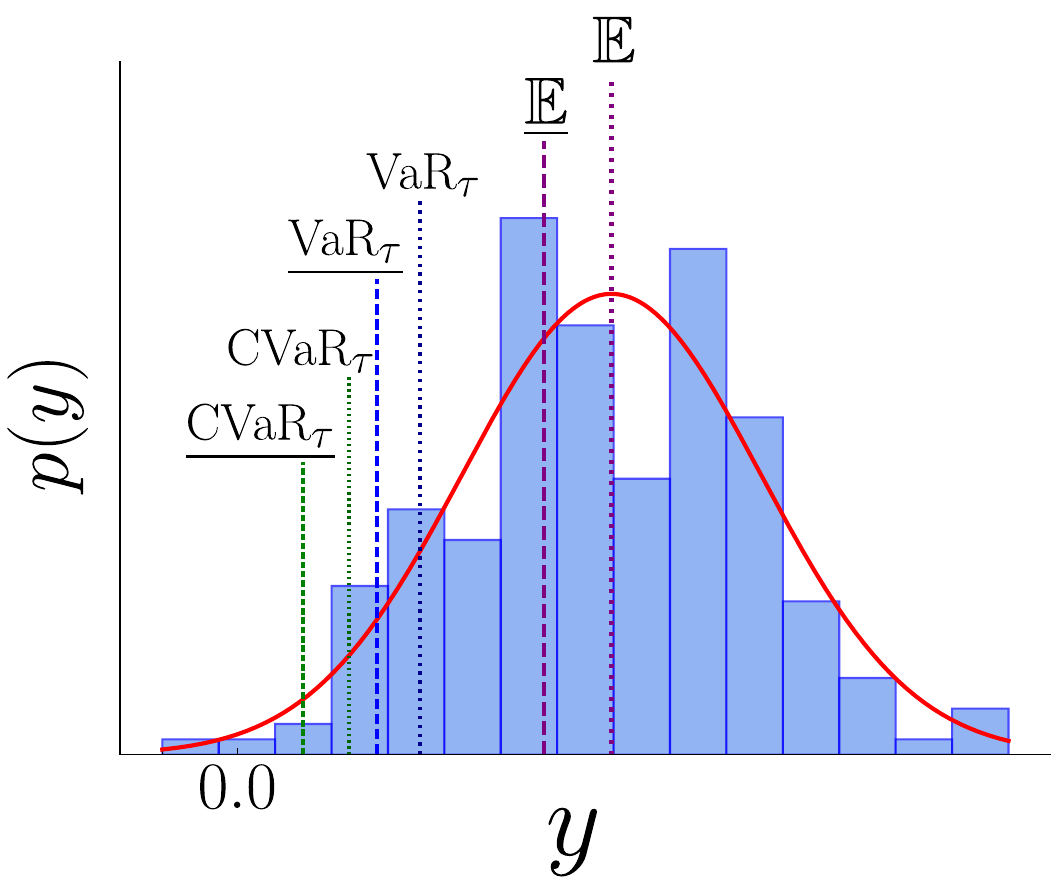} \label{fig:fig2a}}
\subfloat[]{\includegraphics[width=0.48\columnwidth]{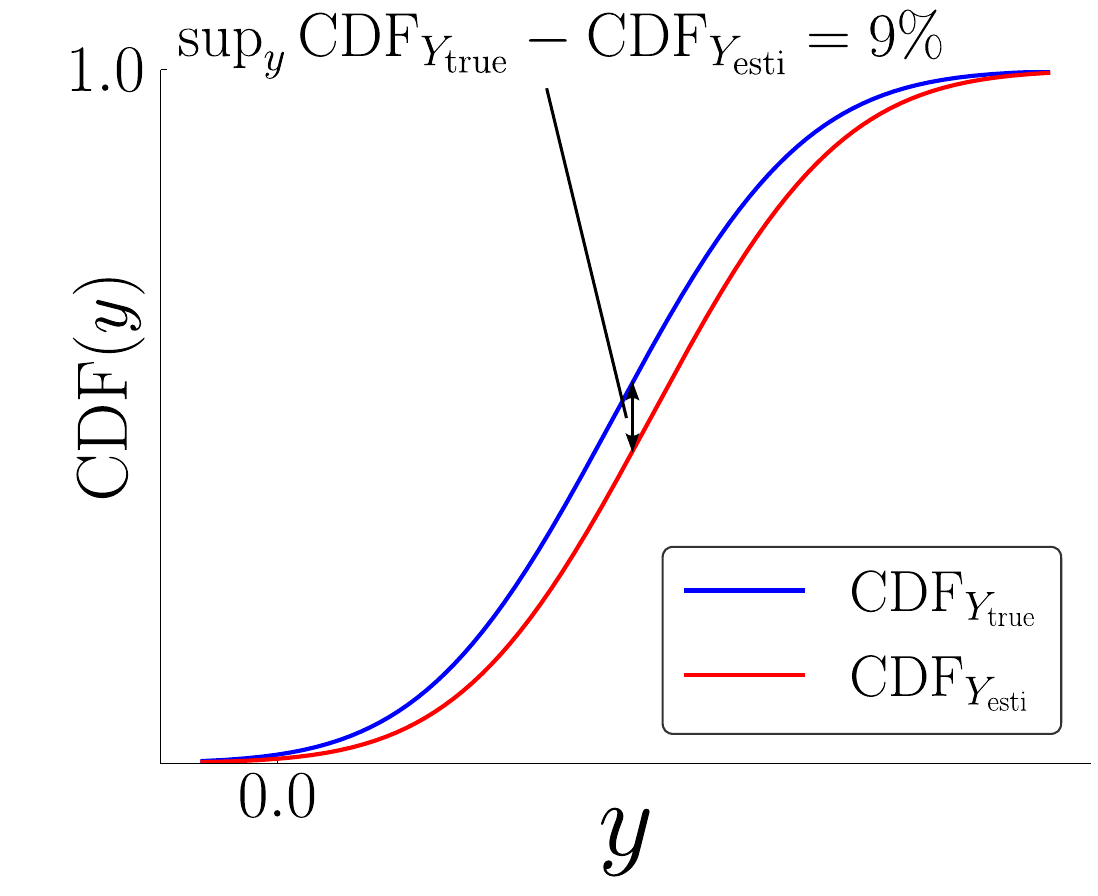} \label{fig:fig2b}}
\caption{
(a): The $\mathrm{PDF}$ of Gaussian distribution $\mathcal{N}(0.5, 0.2^2)$. We draw $N=1000$ samples, illustrated by the histogram with 15 bins. Both the real values of the risk measures and their lower bounds are shown. The lower bounds are computed using $\tau=0.1$ and $\delta=0.05$. (b): The estimated distribution is $\mathcal{N}(0.5, 0.2^2)$, while the true distribution is $\mathcal{N}(0.45, 0.2^2)$, showing a distribution shift.
}
\vspace{-0.5cm}
\end{figure}

\subsection{Robust Bounds on Risk Measures}
We can draw samples from the estimated distribution $Y_{\mathrm{esti}}$ but it may not fully match the true distribution $Y_{\mathrm{true}}$. This requires us to derive bounds that are robust to distribution shifts. The following lemma provides such bounds, combining Corollaries 4, 5, and 6 from \cite{vincent2024guarantees}.
\begin{lemma}
\label{lm:dist_robust}
    Let $Y_{\mathrm{esti}}^{(1)},\dots, Y_{\mathrm{esti}}^{(N)}$ be i.i.d. samples of a random variable $Y_{\mathrm{esti}}$, with ${\Pr[Y_{\mathrm{esti}}\geq Y_{\mathrm{lb}}]=1}$ for some finite $Y_{\mathrm{lb}}$. Let $\eta_{\mathrm{esti}}^{1:N}$ be the order statistics, i.e. $Y_{\mathrm{esti}}^{(1)},\dots, Y_{\mathrm{esti}}^{(N)}$ in descending order. $\eta_{\mathrm{esti}}^{k}$ denotes the $k$th smallest sample. Assuming $\sup_y \mathrm{CDF}_{Y_{\mathrm{true}}}(y) - \mathrm{CDF}_{Y_{\mathrm{esti}}}(y)  \leq \ell$, with ${\ell\in[0,\tau)}$.
    
    \textbf{Robust VaR Bound}: The $\ell$-robust $\mathrm{VaR}$ bound is
    \begin{align}
    \underline{\mathrm{VaR}_{\tau}^{\ell}}(Y_{\mathrm{esti}}^{(1)},\dots,Y_{\mathrm{esti}}^{(N)})  = \underline{\mathrm{VaR}_{\tau-\ell}}.
    \end{align}
    
    \textbf{Robust CVaR Bound}: Replacing $\epsilon$ by $\epsilon'=\epsilon-\ell$, we have the $\ell$-robust $\mathrm{CVaR}$ bound:
    \begin{equation}
        \begin{aligned}
            \underline{\mathrm{CVaR}_{\tau}^{\ell}} := \frac{1}{\tau} \bigg[ \epsilon' b 
            &+ \left( \frac{k}{N} - \epsilon' - 1 + \tau \right) \eta^k_{\mathrm{esti}} \\
            &+ \frac{1}{N} \sum_{i=k+1}^{N} \eta^{i}_{\mathrm{esti}} \bigg].
        \end{aligned}
    \end{equation}
We can obtain a similar robust bound $\underline{\mathbb{E}^\ell}=\underline{\mathrm{CVaR}_1^{\ell}}$. \vspace{0.5mm}

\textbf{Probabilistic Guarantee}: For $\rho_\tau^\ell$ being $\mathrm{VaR}_\tau^\ell$ and $\mathrm{CVaR}_\tau^\ell$, we have $\Pr[\underline{\rho_{\tau}^\ell}(Y_{\mathrm{esti}}^{(1)},\dots,Y_{\mathrm{esti}}^{(N)})\leq\rho_{\tau}(Y_{\mathrm{true}})]\geq1-\delta$.
\end{lemma}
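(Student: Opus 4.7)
The plan is to reduce each of the three $\ell$-robust bounds to the corresponding non-robust bound of Lemma~\ref{lm:lbs} by absorbing the deterministic CDF shift $\sup_y[\mathrm{CDF}_{Y_{\mathrm{true}}}(y) - \mathrm{CDF}_{Y_{\mathrm{esti}}}(y)] \leq \ell$ into either a shift of the risk level (for $\mathrm{VaR}$) or an adjustment of the concentration parameter $\epsilon$ (for $\mathrm{CVaR}$ and $\mathbb{E}$). Because the shift assumption is entirely deterministic, the $1-\delta$ probabilistic event of Lemma~\ref{lm:lbs} applied to the estimated samples passes through unchanged, so no union bound is needed.

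I would first dispatch the $\mathrm{VaR}$ claim as a warm-up. The shift assumption yields, for any $y$ with $\mathrm{CDF}_{Y_{\mathrm{esti}}}(y) \leq \tau - \ell$, the implication $\mathrm{CDF}_{Y_{\mathrm{true}}}(y) \leq \tau$. Taking suprema and using the definition of $\mathrm{VaR}$ then gives the deterministic inequality $\mathrm{VaR}_{\tau-\ell}(Y_{\mathrm{esti}}) \leq \mathrm{VaR}_{\tau}(Y_{\mathrm{true}})$. Since $\ell \in [0,\tau)$ the shifted level $\tau-\ell$ is admissible for Lemma~\ref{lm:lbs}, so applying the non-robust $\mathrm{VaR}$ lower bound at that level to the i.i.d.\ samples $Y_{\mathrm{esti}}^{(1)},\dots,Y_{\mathrm{esti}}^{(N)}$ produces $\underline{\mathrm{VaR}_{\tau-\ell}}(Y_{\mathrm{esti}}^{(1)},\dots,Y_{\mathrm{esti}}^{(N)}) \leq \mathrm{VaR}_{\tau-\ell}(Y_{\mathrm{esti}})$ with probability at least $1-\delta$. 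Chaining these two inequalities gives the $\ell$-robust $\mathrm{VaR}$ claim.

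For the $\mathrm{CVaR}$ case I would apply the pointwise VaR shift inside the integral representation $\mathrm{CVaR}_\tau(Y_{\mathrm{true}}) = \frac{1}{\tau}\int_0^\tau \mathrm{VaR}_\nu(Y_{\mathrm{true}})\,d\nu$. For $\nu \geq \ell$ the integrand is bounded below by $\mathrm{VaR}_{\nu-\ell}(Y_{\mathrm{esti}})$, while for $\nu < \ell$ the shifted level is non-positive and one falls back to the deterministic floor $b := Y_{\mathrm{lb}}$. Splitting the integral at $\ell$ and substituting $\mu = \nu - \ell$ gives the mixture identity
\begin{equation*}
\mathrm{CVaR}_\tau(Y_{\mathrm{true}}) \;\geq\; \tfrac{\ell}{\tau}\, b \;+\; \tfrac{\tau-\ell}{\tau}\, \mathrm{CVaR}_{\tau-\ell}(Y_{\mathrm{esti}}).
\end{equation*}
I would then substitute the non-robust $\mathrm{CVaR}$ lower bound of Lemma~\ref{lm:lbs} at level $\tau-\ell$ into the right-hand side, combine the two $b$-coefficients, and recognise that the resulting weighted-average-of-order-statistics expression has precisely the displayed form with the shift-adjusted concentration parameter $\epsilon'$ absorbing the extra $\frac{\ell}{\tau}b$ mass. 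The robust expectation bound then follows as the $\tau = 1$ specialisation, just as $\underline{\mathbb{E}}=\underline{\mathrm{CVaR}_1}$ in the non-robust case.

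The main obstacle I anticipate is the algebraic bookkeeping in the final assembly of the $\mathrm{CVaR}$ formula: the index $k$ selected by Lemma~\ref{lm:lbs} at the shifted level $\tau-\ell$ must be rewritten so that the index condition $\frac{k}{N} - \epsilon' - 1 + \tau \geq 0$ from the statement emerges, and one has to verify that the coefficient structure in front of $\eta^k_{\mathrm{esti}}$ and the normalising factor $\frac{1}{\tau}$ come out exactly as displayed after combining the $\frac{\ell}{\tau}b$ term from the mixture identity with the $\frac{\epsilon}{\tau-\ell}b$ term arising inside $\underline{\mathrm{CVaR}_{\tau-\ell}}$. The probabilistic guarantee at level $1-\delta$ is inherited verbatim from the single concentration event on the estimated samples, since the shift inequality is deterministic.
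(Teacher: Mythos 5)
Your overall strategy — converting the deterministic one-sided KS shift into a level shift for $\mathrm{VaR}$ and into an adjusted concentration slack for $\mathrm{CVaR}$, then invoking Lemma~\ref{lm:lbs} on the estimated samples so that the single $1-\delta$ event carries the guarantee — is the right one, and it is essentially how the cited Corollaries 4--6 of \cite{vincent2024guarantees} are obtained (the paper itself gives no proof, only the citation). Your $\mathrm{VaR}$ argument is complete, up to noting that the sample-size condition of Lemma~\ref{lm:lbs} must now hold at the shifted level, i.e.\ $N\geq\lceil\ln\delta/\ln(1-(\tau-\ell))\rceil$.

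The $\mathrm{CVaR}$ part has a genuine problem that you defer as ``algebraic bookkeeping'' but that does not in fact resolve in favour of the displayed formula. Carrying out your own substitution: $\tfrac{\ell}{\tau}b+\tfrac{\tau-\ell}{\tau}\cdot\tfrac{1}{\tau-\ell}\bigl[\epsilon b+(\tfrac{k}{N}-\epsilon-1+\tau-\ell)\eta^{k}+\tfrac{1}{N}\sum_{i=k+1}^{N}\eta^{i}\bigr]=\tfrac{1}{\tau}\bigl[(\epsilon+\ell)b+(\tfrac{k}{N}-(\epsilon+\ell)-1+\tau)\eta^{k}+\tfrac{1}{N}\sum_{i=k+1}^{N}\eta^{i}\bigr]$, with $k$ the smallest index satisfying $\tfrac{k}{N}-(\epsilon+\ell)-1+\tau\geq0$. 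That is the non-robust formula with $\epsilon$ replaced by $\epsilon+\ell$, not by $\epsilon'=\epsilon-\ell$ as the lemma states. The sign matters: a robust lower bound must be \emph{more} conservative than the non-robust one, and replacing $\epsilon$ by $\epsilon-\ell$ shifts weight from the floor $b$ onto larger order statistics and yields a \emph{larger} value, so no correct derivation can produce $\epsilon-\ell$. You should either flag the stated formula as carrying a sign error (your derivation proves the $\epsilon+\ell$ version) or stop, rather than asserting the coefficients ``come out exactly as displayed.'' A second, smaller gap: the floor step $\mathrm{VaR}_{\nu}(Y_{\mathrm{true}})\geq b$ for $\nu<\ell$ requires $\Pr[Y_{\mathrm{true}}\geq Y_{\mathrm{lb}}]=1$, which the lemma only assumes for $Y_{\mathrm{esti}}$ and which does not follow from the one-sided KS bound (up to $\ell$ of the true mass may lie arbitrarily far below $Y_{\mathrm{lb}}$); you need to state this as an additional hypothesis.
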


Lemma~\ref{lm:dist_robust} provides robust bounds to mitigate distribution shifts when the $\mathrm{CDF}_{Y_{\mathrm{true}}}$ is above $\mathrm{CDF}_{Y_{\mathrm{esti}}}$,  as illustrated in Fig.~\ref{fig:fig2b}. 
This may result in a higher probability of $Y_{\mathrm{true}} <0$, which is undesirable in our setting.
Distribution shifts may arise due to the sim-to-real gap, as in \cite{vincent2024guarantees}. In this paper, we use these robust bounds to address distribution shifts in Bayesian filters, which may result from mismatches in the transition or measurement models.

\vspace{-0.5mm}
\section{Problem Formulation}
We consider a robot modeled by~(\ref{eq:sde_ro}) and a moving object modeled by~(\ref{eq:sde_obj}), both operating within a shared workspace. The robot should satisfy task specifications involving the object, such as tracking or collision avoidance, as discussed in the application examples later in the paper. We use $O$ to denote the random vector representing the object belief.

\begin{assumption}
\label{as:sample_belief}
    The controller has access to a finite set of i.i.d samples from the object's belief $O$, denoted as $\bm{o}^{(1)},\dots,\bm{o}^{(N)}$, where \(N\) represents the number of samples.
\end{assumption}

In practice, these samples can be drawn from belief distributions generated by Bayesian filtering techniques, such as a particle filter~\cite{chun20243d,dorner2024smooth,masmitja2023dynamic} or an extended Kalman filter~\cite{thrun2006probabilistic}.

As a function of the object belief $O$, the belief about $h$ is represented by $H(\bm{x}, O)$. We also have samples $ h^{(i)}(\bm{x}, \bm{o}^{(i)}) $ for $ i = 1, \dots, N $. In this paper, we aim to satisfy the chance constraint ${\Pr[H \geq 0] \geq 1-\tau}$, which ensures the safety specification with a risk level $\tau$.

\begin{problem}
\label{pb:set_inv}
Under Assumption~\ref{as:sample_belief}, given the robot motion described by ~(\ref{eq:sde_ro}) and the object dynamics by ~(\ref{eq:sde_obj}), a reference control input $ \bm{u}_\mathrm{ref} $, and a safe set $ \mathcal{C} $ defined over $ \bm{x} $ and the true object state $ \bm{o} $, the objective is to synthesize control inputs that remain close to $ \bm{u}_\mathrm{ref} $ while ensuring ${\Pr[H(\bm{x}_t,O_t) \geq 0] \geq 1-\tau}$, at every $t>t_0$ with level $\tau$. 
\end{problem}

The chance constraint $\Pr[H \geq 0] \geq 1-\tau$, however, is intractable in general non-Gaussian settings. To address this, we seek to ensure the tail risk measurements $\rho_\tau \geq0$, thereby enforcing the chance constraint, as shown in~(\ref{eq:chance_constr}). However, computing $\rho_\tau(H)$ directly is also difficult as we only know the samples of $H$. In the next section, we will address this challenge using the sample-based lower bounds on tail risk measures.



\section{Risk-aware Control}
\label{sec:method}
We present our approach to address Problem~\ref{pb:set_inv} by first formulating the dynamics of the sample-based belief for the object. Next, we introduce a novel form of BCBF and define the safe set as its zero super-level set. We then show that this BCBF remains robust to distributional shifts within a specified bound. Finally, we develop a controller that guarantees forward invariance of the safe set while preserving distributional robustness.

\subsection{Belief Dynamics of the Moving Object}
\label{sec:belief_dyn}
Given the belief samples of the moving object, we can define its belief state as 
\begin{align}
\label{eq:belief_state}
    \bm{b} = \begin{bmatrix} \bm{o}^{(1)} & \dots & \bm{o}^{(N)} \end{bmatrix}^\top \in \mathcal{B} \subseteq \mathbb{R}^{N \cdot n_o}.
\end{align}
As all the samples follow the same stochastic dynamics defined in~(\ref{eq:sde_ro}), the continuous time stochastic dynamics of the belief state $\bm{b}$ can be obtained as
\begin{equation}
\label{eq:belief_dynamics}
\begin{aligned}
    \mathrm{d} \bm{b} &= 
    \begin{bmatrix}
        \bm{\xi}\left(\bm{o}^{(1)}\right) \\
        \vdots \\
        \bm{\xi}\left(\bm{o}^{(N)}\right)
    \end{bmatrix} \mathrm{d} t
    +
    \begin{bmatrix}
        \bm{d} \left(\bm{o}^{(1)}\right) \, \mathrm{d} \bm{w}^{(1)} \\
        \vdots \\
        \bm{d} \left(\bm{o}^{(N)}\right)\, \mathrm{d} \bm{w}^{(N)}
    \end{bmatrix} \\
    &:= \bm{\Xi}(\bm{b})\mathrm{d}t+\bm{D}(\bm{b}) \mathrm{d} \bm{W},
\end{aligned}
\end{equation}
where $\bm{D}(\bm{b})=\mathrm{BD}\left( \{ \bm{d} \}_{i=1}^N \right)$ is a constant block diagonal matrix and $\bm{W}$ denotes a $N \cdot n_w$ dimensional Brownian motion. We note that the dimension of the belief state can become very high as $N$ increases, making methods such as model predictive control (MPC) computationally expensive when applied directly to sample-based beliefs. Scenario-based MPC \cite{de2021scenario,de2023scenario} demonstrates computational efficiency, but can become conservative without a careful choice of the number of samples $N$ \cite{mustafa2023probabilistic}. By contrast, the proposed BCBF method remains both less conservative and computationally efficient, even when using a large number of samples, as shown in the experiment later.

\subsection{Belief CBF from Concentration Bounds}
\label{sec:safe_set}

We aim to ensure that $\rho_\tau(H) \geq 0$, given only a finite set of samples $h^{(i)}(\bm{x}, \bm{o}^{(i)})$ for $i = 1, \dots, N$. To achieve this, we leverage the lower bounds on tail risk measures as presented in Lemma~\ref{lm:lbs}. Specifically, we define a scalar function of the sample-based belief state $\bm{b}$ from Eq.~(\ref{eq:belief_state}):
\begin{align*}
\tilde{h}(\bm{x},\bm{b}) &= \underline{\rho_{\tau}}\left( h^{(1)}(\bm{x},\bm{o}^{(1)}),\,\dots,\,h^{(N)}(\bm{x},\bm{o}^{(N)})\right).
\end{align*}
We select $\tilde{h}(\bm{x},\bm{b})$ as the BCBF, leading to the definition of a new safe set:
\begin{equation}
    \begin{aligned}
        \tilde{\mathcal{C}} &:= \bigl\{\, (\bm{x},\bm{b}) \in \mathcal{X} \times \mathcal{B} \,\big|\; \tilde{h}(\bm{x},\bm{b}) \ge 0 \bigr\}, \\
        \partial \tilde{\mathcal{C}} &:= \bigl\{\, (\bm{x},\bm{b}) \in \mathcal{X} \times \mathcal{B} \,\big|\; \tilde{h}(\bm{x},\bm{b}) = 0 \bigr\}.
    \end{aligned}
    \label{eq:belief_safe_set}
\end{equation}
If this set is ensured to be forward invariant, we can indicate ${\rho_\tau(H) \geq0}$ is satisfied with a probability of at least $1-\delta$.

\subsection{Robustness against Distributional Shift}
\label{sec:robustness}
The provided safe set in~(\ref{eq:belief_safe_set}) accounts for the Monte Carlo error introduced by sampling. However, when employing Bayesian filters,  we can encounter mismatches in the transition or measurement models. These mismatches may cause a distributional shift between the true belief distribution $O_{\mathrm{true}}$ and the estimated belief distribution $O_{\mathrm{esti}}$, which in turn leads to a shift between $H_{\mathrm{true}}$ and $H_{\mathrm{esti}}$.

We assume $H_{\mathrm{true}}$ and $H_{\mathrm{esti}}$ differ by at most ${\ell \in [0, \tau)}$ in one-sided  Kolmogorov-Smirnov (KS) distance, i.e., ${\sup_{h} \mathrm{CDF}_{H_{\mathrm{true}}}(h) - \mathrm{CDF}_{H_{\mathrm{esti}}}(h) \leq \ell}$. This captures the undesired cases where $\mathrm{CDF}_{H_{\mathrm{true}}}$ is above $\mathrm{CDF}_{H_{\mathrm{esti}}}$, increasing the probability of $H_{\mathrm{true}} < 0$, similar to the example in Fig~\ref{fig:fig2b}. 
Applying Lemma~\ref{lm:dist_robust}, we define the $\ell$-robust safe set:
\begin{equation*}
    \begin{aligned}
        \tilde{\mathcal{C}}^\ell &:= \bigl\{\, (\bm{x},\bm{b}) \in \mathcal{X} \times \mathcal{B} \,\big|\; \tilde{h}^\ell(\bm{x},\bm{b}) \ge 0 \bigr\}, \\
        \partial \tilde{\mathcal{C}}^\ell &:= \bigl\{\, (\bm{x},\bm{b}) \in \mathcal{X} \times \mathcal{B} \,\big|\; \tilde{h}^\ell(\bm{x},\bm{b}) = 0 \bigr\}.
    \end{aligned}
\end{equation*}
with the $\ell$-robust BCBF as
\begin{align}
\label{eq:belief_cbf}
\tilde{h}^\ell(\bm{x},\bm{b}) &= \underline{\rho^{\ell}_{\tau}}\left( h^{(1)}(\bm{x},\bm{o}^{(1)}_{\mathrm{esti}}),\,\dots,\,h^{(N)}(\bm{x},\bm{o}^{(N)}_{\mathrm{esti}})\right),
\end{align}
where $\bm{b} = \begin{bmatrix} \bm{o}^{(1)}_{\mathrm{esti}} & \dots & \bm{o}^{(N)}_{\mathrm{esti}} \end{bmatrix}^\top$. In the previous sections, we simply use $\bm{o}^{(i)}$ for brevity, but we note that all the samples are drawn from the estimated belief in practice.




\subsection{Controller Synthesis}
\label{sec:control}
To synthesize control inputs to maintain $\tilde{\mathcal{C}}^{\ell}$ forward invariant while keeping close to reference $\bm{u}_\mathrm{ref}$, we formulate the following quadratic program:
\begin{equation}
\label{eq:cbf-qp}
\begin{aligned}
    \bm{u}^* &= \arg\min_{\bm{u} \in \mathcal{U}} (\bm{u} - \bm{u}_\mathrm{ref})^T \bm{Q} (\bm{u} - \bm{u}_\mathrm{ref}) \\
    \text{s.t.} \quad
    & \frac{\partial \tilde{h}^\ell}{\partial \bm{x}} \left(f(\bm{x}) + g(\bm{x})\bm{u}\right) 
    + \frac{1}{2} \operatorname{tr} 
    \left( \bm{\sigma}^\top \frac{\partial^2 \tilde{h}^\ell}{\partial \bm{x}^2} \bm{\sigma} \right)
    - \frac{\left( \frac{\partial \tilde{h}^\ell}{\partial \bm{x}} \bm{\sigma} \right)^2}{\tilde{h}^\ell} \\
    &+\frac{\partial \tilde{h}^\ell}{\partial \bm{b}} \bm{\xi}(\bm{b})
    + \frac{1}{2} \operatorname{tr} 
    \left( \bm{d}^\top \frac{\partial^2 \tilde{h}^\ell}{\partial \bm{b}^2} \bm{d} \right)
    - \frac{\left( \frac{\partial \tilde{h}^\ell}{\partial \bm{b}} \bm{d} \right)^2}{\tilde{h}^\ell} \\
    &\geq - \gamma \tilde{h}^\ell(\bm{x},\bm{b})^3,
\end{aligned}
\end{equation}
where $\gamma \geq 0$, and $\bm{Q}$ stands for the weighting matrix. We note that $\tilde{h}^\ell$ may be non-smooth due to the permutations introduced by the order statistics, which can lead to discontinuities in its gradient. However, as shown in Lemma~\ref{lm:non_smooth}, this non-smoothness does not affect the forward invariance property. Therefore, the control input $\bm{u^*}$ is sufficient to ensure forward invariance, as stated in Theorem~\ref{th:main}.
\begin{theorem}
\label{th:main}
Suppose that the controller obtained by solving~(\ref{eq:cbf-qp}) remains feasible at all times, and the non-smooth stochastic CBF $\tilde{h}^\ell(\bm{x},\bm{b})$ is defined as in Eq.~(\ref{eq:belief_cbf}). Then the safe set $\tilde{\mathcal{C}}^\ell$ is forward invariant.
\end{theorem}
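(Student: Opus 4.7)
The plan is to recognize that the constraint in~(\ref{eq:cbf-qp}) is precisely the non-smooth stochastic CBF condition of Def.~\ref{def:nszcbf} applied to the joint system formed by the robot dynamics~(\ref{eq:sde_ro}) and the belief dynamics~(\ref{eq:belief_dynamics}), and then to invoke Lemma~\ref{lm:non_smooth} to lift Thm.~\ref{th:nszcbf} to the non-smooth setting, yielding forward invariance of $\tilde{\mathcal{C}}^\ell$. First I would stack the two SDEs into a joint SDE for $(\bm{x},\bm{b}) \in \mathcal{X}\times\mathcal{B}$. The standing assumptions on $\bm{f}$, $\bm{g}$, $\bm{\sigma}$, $\bm{\xi}$, and $\bm{d}$ transfer directly to the joint drift $(\bm{f}(\bm{x})+\bm{g}(\bm{x})\bm{u},\,\bm{\Xi}(\bm{b}))$ and to the block-diagonal diffusion built from $\bm{\sigma}(\bm{x})$ and $\bm{D}(\bm{b})$, so the joint SDE admits a unique global strong solution and fits the hypotheses behind Thm.~\ref{th:nszcbf}. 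The block-diagonal structure is crucial because it eliminates any $\bm{x}$-$\bm{b}$ cross terms in the infinitesimal generator, which is why the QP constraint separates cleanly into an $\bm{x}$-block and a $\bm{b}$-block.

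Next I would match the QP constraint term by term to the ZSCBF inequality~(\ref{ineq:nszcbf}) evaluated at $\tilde{h}^\ell$. Choosing the extended class-$\kappa$ function $\alpha(s)=\gamma s$ in~(\ref{ineq:nszcbf}) makes its right-hand side equal to $-\tilde{h}^\ell(\bm{x},\bm{b})^2\,\alpha(\tilde{h}^\ell(\bm{x},\bm{b}))=-\gamma(\tilde{h}^\ell)^3$, which coincides with the right-hand side of the QP constraint; the left-hand sides then match exactly because of the block-diagonal decoupling. The assumed feasibility of~(\ref{eq:cbf-qp}) at every joint state supplies a control action $\bm{u}^*$ satisfying this inequality, and a standard parametric-QP / KKT argument (strictly convex cost, continuously varying constraint data) yields that $\bm{u}^*$ is locally Lipschitz in $(\bm{x},\bm{b})$, matching the regularity requirement of Thm.~\ref{th:nszcbf}.

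The main obstacle is the non-smoothness of $\tilde{h}^\ell$. Both $\underline{\mathrm{VaR}^{\ell}_\tau}$ and $\underline{\mathrm{CVaR}^{\ell}_\tau}$ are built from order statistics of the sample values $h^{(i)}(\bm{x},\bm{o}^{(i)}_{\mathrm{esti}})$, and the induced ranking of samples can change as the joint state evolves; at such crossings the classical gradients $\partial\tilde{h}^\ell/\partial\bm{x}$ and $\partial\tilde{h}^\ell/\partial\bm{b}$ are not defined, so Thm.~\ref{th:nszcbf} as originally stated does not directly apply. This is precisely the gap filled by Lemma~\ref{lm:non_smooth}: because $\tilde{h}^\ell$ is a finite composition of continuous sample functions with a permutation-invariant selection rule (ordering plus a fixed linear combination), it remains continuous on $\mathcal{X}\times\mathcal{B}$, and Lemma~\ref{lm:non_smooth} then extends the forward-invariance conclusion of Thm.~\ref{th:nszcbf} to our BCBF. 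Combining the QP-derived ZSCBF inequality with the non-smooth version of Thm.~\ref{th:nszcbf} and the initial condition $(\bm{x}_{t_0},\bm{b}_{t_0})\in\tilde{\mathcal{C}}^\ell$ inherent in the definition of forward invariance yields $\Pr[(\bm{x}_t,\bm{b}_t)\in\tilde{\mathcal{C}}^\ell,\,\forall t\ge t_0]=1$, which is the claimed forward invariance of $\tilde{\mathcal{C}}^\ell$.
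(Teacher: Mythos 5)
Your proposal is correct and follows essentially the same route as the paper: substitute the belief state $\bm{b}$ and belief dynamics~(\ref{eq:belief_dynamics}) for the object state and~(\ref{eq:sde_obj}) in Theorem~\ref{th:nszcbf}, identify the QP constraint with the ZSCBF inequality (with $\alpha(s)=\gamma s$), and invoke Lemma~\ref{lm:non_smooth} to cover the non-smoothness of the order-statistic-based $\tilde{h}^\ell$. Your version is simply more explicit about the term-by-term matching and the block-diagonal decoupling than the paper's two-sentence substitution argument.
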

\begin{proof}
Following Theorem~\ref{th:nszcbf}, we substitute the object state $\bm{o}$ with the belief state $\bm{b}$ as defined in Eq.~(\ref{eq:belief_state}), the system~(\ref{eq:sde_obj}) to the belief dynamics~(\ref{eq:belief_dynamics}), and the function $h$ to $\tilde{h}^{\ell}$. Provided the controller remains feasible, it follows that $\tilde{\mathcal{C}}^\ell$ remains invariant with probability 1.
\end{proof}



\section{Example Applications and Results}
We validate our approach through numerical simulations in two example applications.\footnote{Code available at: \url{https://github.com/KTH-RPL-Planiacs/sample_based_bcbf}} These applications focus on underwater scenarios where both stochastic dynamics and uncertain states naturally occur. Specifically, we consider a planar AUV, modeled as a stochastic unicycle, as in \cite{joris10886238}:
\begin{equation*}
\begin{bmatrix} 
    \mathrm{d} p_x \\ 
    \mathrm{d} p_y \\ 
    \mathrm{d} \theta 
\end{bmatrix} = 
\begin{bmatrix} 
    \mathrm{cos}(\theta) & 0 \\ 
    \mathrm{cos}(\theta) & 0 \\ 
    0 & \mathrm{sin}(\theta) 
\end{bmatrix} 
\bm{u}\mathrm{d}t + \bm{\sigma} \, \mathrm{d} \bm{z},
\end{equation*}
where the state vector is defined as $\bm{x} = [p_x, p_y, \theta]^\top$, consisting of the 2D position $\bm{p} = [p_x, p_y]^\top$ and orientation $\theta$, and the control input is given by $\bm{u} = [u_v, u_\omega]^\top$, representing the linear and angular speeds. The diffusion term $\bm{\sigma}=\mathrm{diag}([0.03, 0.03,0.01])$. The robot is modeled with a circular footprint of radius $r_{\mathrm{e}}$, while the distance between the rear axle and the center is denoted as $s_\mathrm{e}$. Meanwhile, we model the moving object as a stochastic single integrator 
\begin{equation}
\label{eq:single_inte}
\begin{bmatrix} 
    \mathrm{d} q_x \\ 
    \mathrm{d} q_y 
\end{bmatrix} = 
\begin{bmatrix} 
    v_x \\ 
    v_y \\ 
\end{bmatrix} 
\mathrm{d}t + \bm{d} \, \mathrm{d} \bm{w},
\end{equation}
where the state is $\bm{o} = [q_x, q_y]^\top$, meaning the 2D position, and the diffusion term is $\bm{d} = \mathrm{diag}([0.1, 0.1])$. We have the knowledge of an estimated velocity $[v_x,v_y]^\top$ from the perception module. The object also has a circular footprint of radius $r_{\mathrm{o}}$.

With a prior sample-based belief of the object, we assume the samples are propagated continuously according to~(\ref{eq:single_inte}) and no measurement updates occur, i.e. no sudden discrete update in the belief. This assumption is motivated by real-world challenges such as temporary low visibility in underwater scenarios. Despite the absence of measurement updates, a well-designed controller should be capable of satisfying the specifications until sufficient information is available to reduce uncertainty. Our approach also accommodates scenarios where measurement updates become available, under an additional assumption as \cite{vahs2024risk}[Assumption 1].




\begin{figure}[t]\captionsetup[subfigure]{font=scriptsize}
\centering
\subfloat[]{\includegraphics[width=0.49\columnwidth]{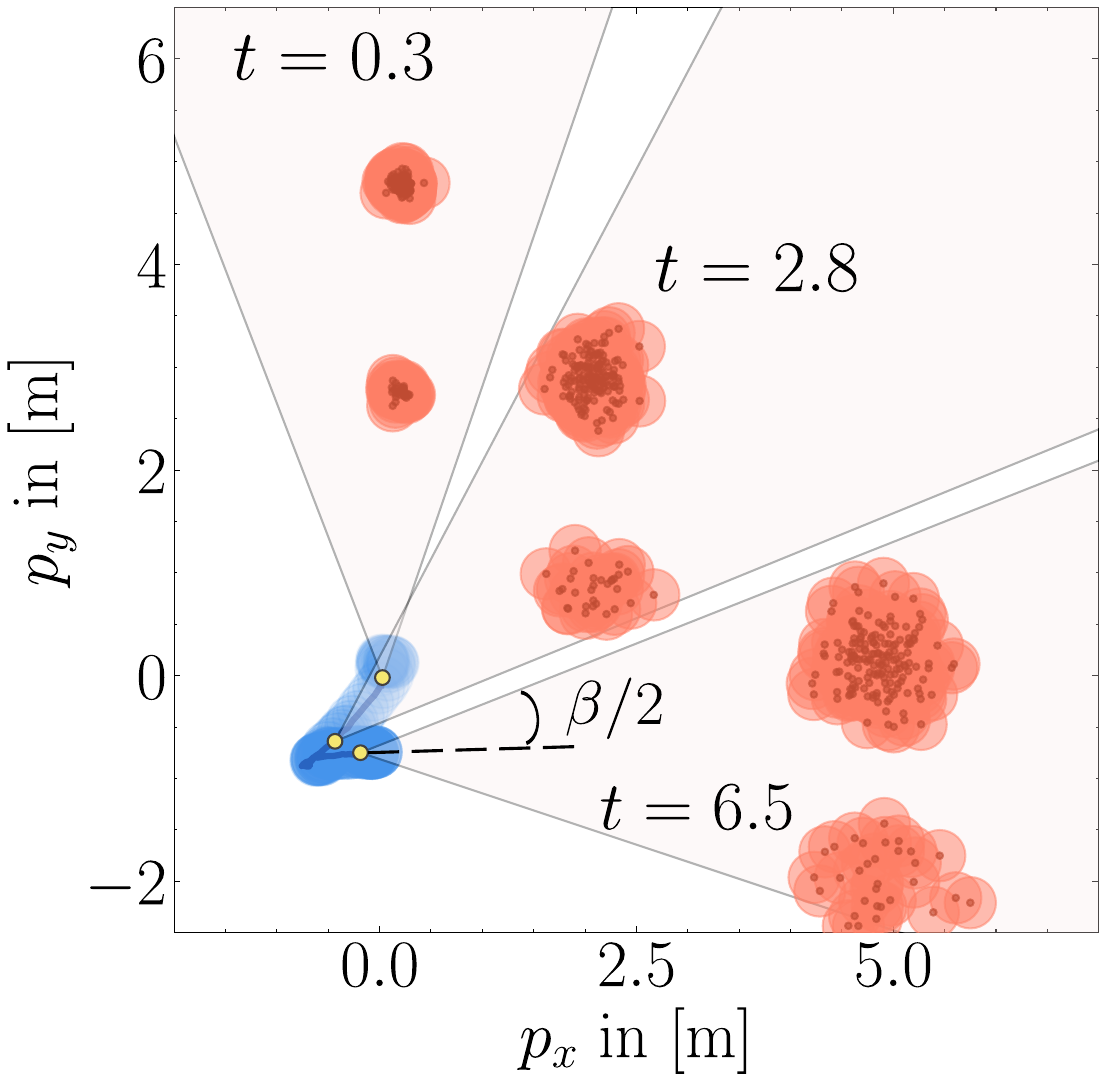} \label{fig:exp_tr_sce}}
\subfloat[]{\includegraphics[width=0.45\columnwidth]{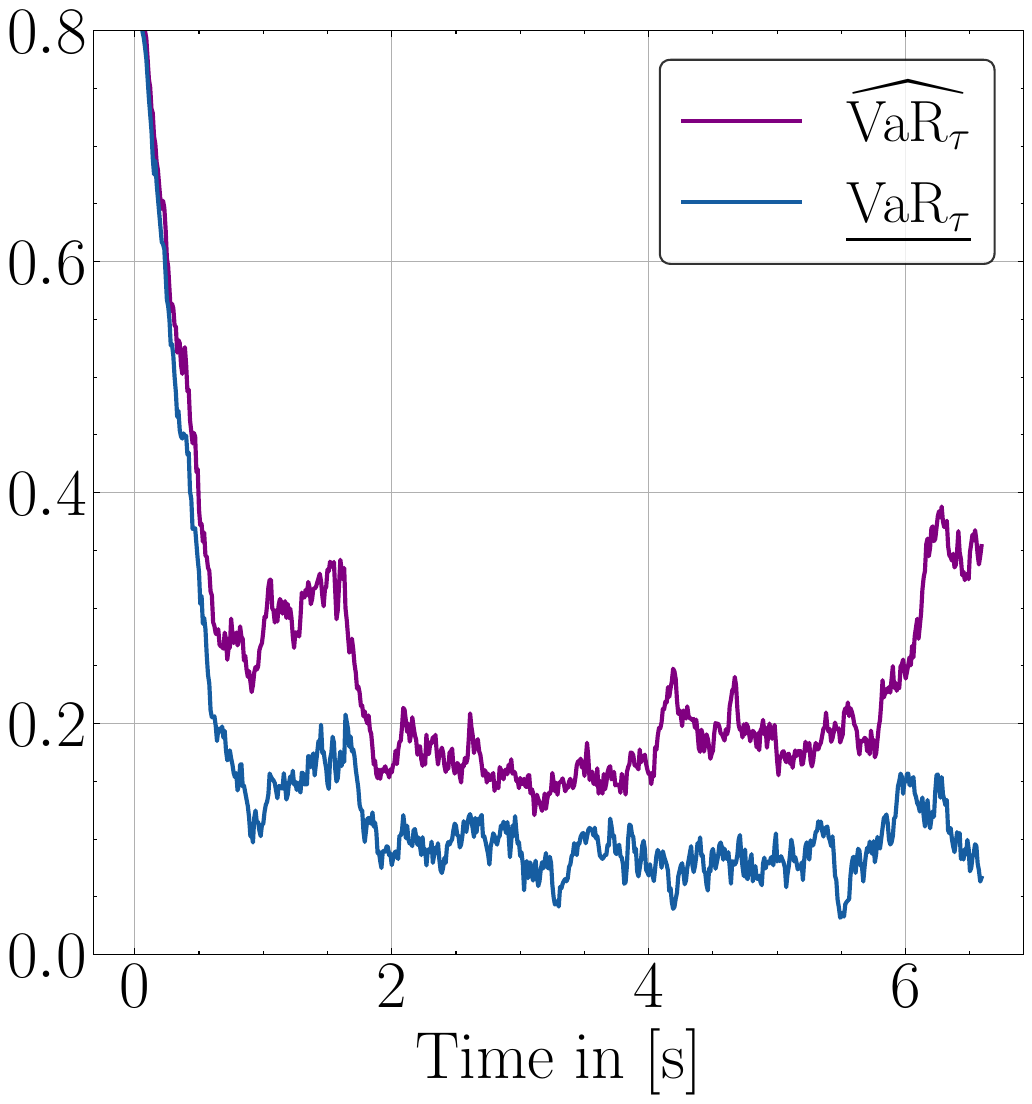} \label{fig:exp_tr_h}}
\caption{
(a): Snapshot of the object tracking scenario. The trajectory of the robot is shown in blue. The FoV (pink sectors), the robot (yellow dots), and the belief samples of the object (red dots) are taken at $t=0.3 \mathrm{s}$, $t=2.8\mathrm{s}$, and $t=6.5\mathrm{s}$. The blue and red shadows represent the circular footprints of the robot and the object, respectively. (b): Evolution of $\widehat{\mathrm{VaR_\tau}}$ and $\underline{\mathrm{VaR_\tau}}$ over the course of the simulation.
}
\vspace{-0.5cm}
\label{fig:exp_tr}
\end{figure}

\subsection{Object Tracking}
\label{sec:exp_tr}
We first apply the proposed BCBF in an object tracking scenario, where the objective is to ensure that the moving object remains within the robot's FoV. Meanwhile, a performance controller attempts to stabilize the robot at its initial position close to $[0,0]^\top$. We model the FoV as a limited angular sector with an amplitude $\beta=\SI{40}{\degree}$ centered at local $x$-axis, as illustrated in Fig.~\ref{fig:exp_tr_sce}. Following the approach in \cite{catellani2023distributed}, we compute the coordinates of the object in the robot's local frame, denoted as $[{}^p q_x, {}^p q_y]^\top$ and define the safe set as the intersection of two zero super-level sets of the function:
\begin{equation*}
\begin{aligned}
    h_i(\bm{x},\bm{o}) =& \tan(\beta/2)\cdot{}^p q_x-r_{\mathrm{o}}/\mathrm{cos}(\beta/2) \\
    &+ (-1)^{i}\cdot {}^p q_y , \quad i \in \{1,2\}.
\end{aligned}
\end{equation*}
We draw $N = 200$ samples drawn from an initial belief of the object, given as a skewed multimodal Gaussian $p(\bm{o}_{t_0}) = 0.85 \cdot \mathcal{N}(\bm{o}_1, \mathbf{\Sigma}_1) + 0.15 \cdot \mathcal{N}(\bm{o}_2, \mathbf{\Sigma}_2)$. The object has velocity $[v_x, v_y]^\top = [0.75, -0.75]^\top$. We use $\underline{\mathrm{VaR}_{0.1}}$ in the BCBF $\tilde{h}_i$, as we look at the worst \SI{10}{\percent} quantile.

As shown in Fig.~\ref{fig:exp_tr_sce}, the object's samples move from left top to right bottom, while becoming more dispersed over time due to stochastic motion. Consequently, the AUV moves backward to keep the sample-based distribution within its FoV, even though the performance controller guides the AUV toward its initial position and the belief distribution is highly skewed. In Fig.~\ref{fig:exp_tr_h}, we observe that for ${i\in\{1,2\}}$, the lower bound $\mathrm{min}_i\underline{\mathrm{VaR}_{\tau}}(h^{(1)}_i,\dots,h^{(N)}_i)$ stays positive, and is consistently smaller than the empirical value $\mathrm{min}_i{\widehat{\mathrm{VaR}_{\tau}}(h^{(1)}_i,\dots,h^{(N)}_i)}$ computed from the empirical $\mathrm{CDF}$ constructed by samples. This indicates $\Pr[H_i \geq 0] \geq 1-\tau$ is satisfied in our simulation.

\subsection{Dynamic Collision Avoidance}
\label{sec:exp_cv}
We then apply BCBF to a dynamic collision avoidance scenario. The objective is to avoid colliding with the object, while the performance controller tries to drive the robot to a target position $[3.0, 3.0]^\top$, as shown as a green star in Fig.~\ref{fig:exp_colav_comp} and~\ref{fig:exp_colav_dist}.  We obtain $N$ samples from the initial belief of the object, which is a skewed Gaussian mixture $p(\bm{o}_{t_0}) = 0.7 \cdot \mathcal{N}(\bm{o}_1, \mathbf{\Sigma}_1) + 0.15 \cdot \mathcal{N}(\bm{o}_2, \mathbf{\Sigma}_2) + 0.15 \cdot \mathcal{N}(\bm{o}_3, \mathbf{\Sigma}_3)$. The object has velocity $[v_x, v_y]^\top = [-0.75, -0.75]^\top$. Here, we use the similar safe set as in \cite{huang2023obstacle}, defined as ${h(\bm{x},\bm{o}) = \|\hat{\bm{p}} \|_2 - (r_\mathrm{e}+r_\mathrm{o})}$, where ${\hat{\bm{p}}=[p_x-q_x+s_\mathrm{e} \cdot \mathrm{cos} (\theta), p_y-q_y+s_\mathrm{e} \cdot \mathrm{sin} (\theta)]^\top}$.

\begin{figure}[t]\captionsetup[subfigure]{font=scriptsize}
\centering
\subfloat[]{\includegraphics[width=0.48\columnwidth]{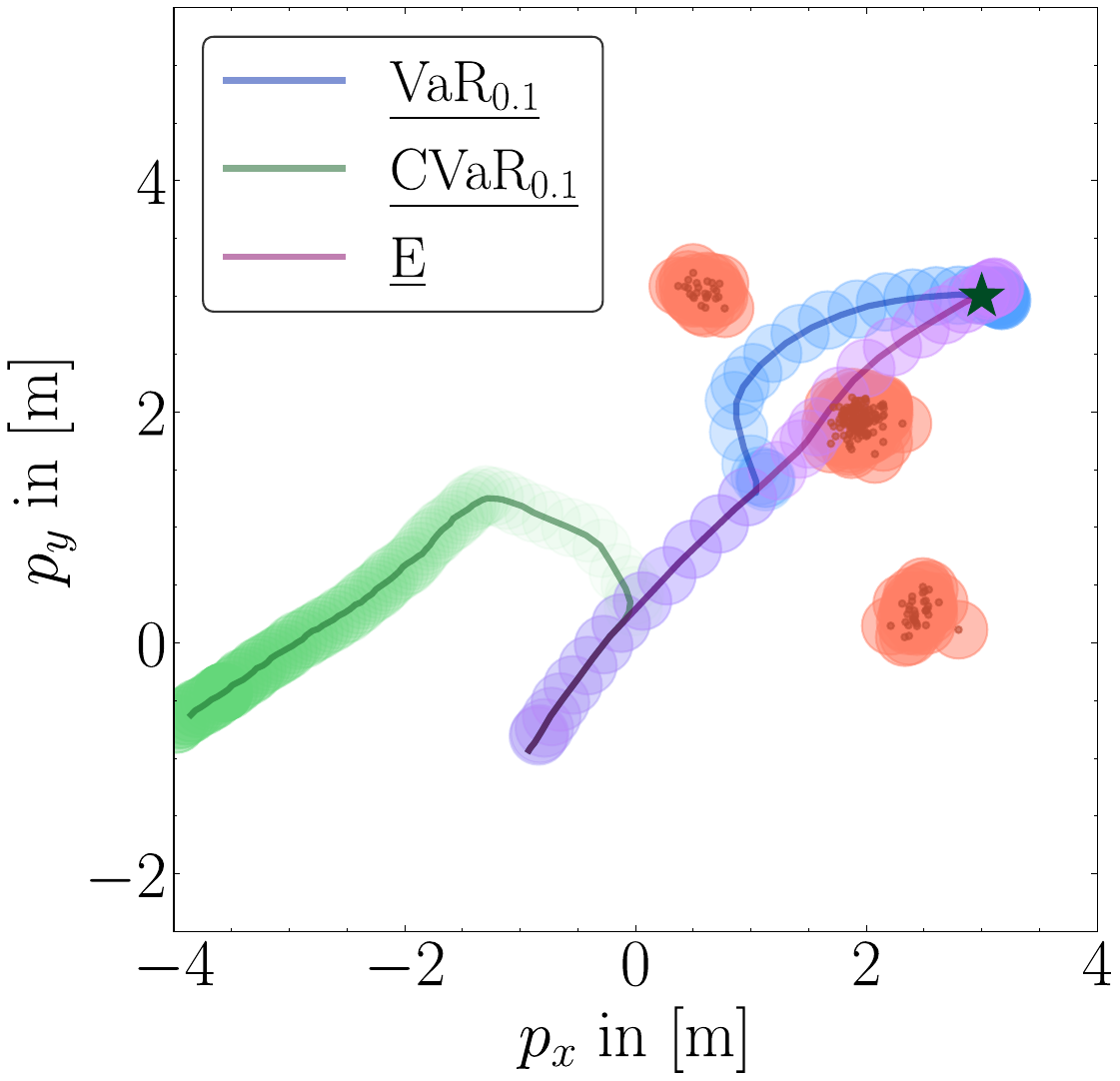} \label{fig:exp_colav_comp}}
\subfloat[]{\includegraphics[width=0.48\columnwidth]{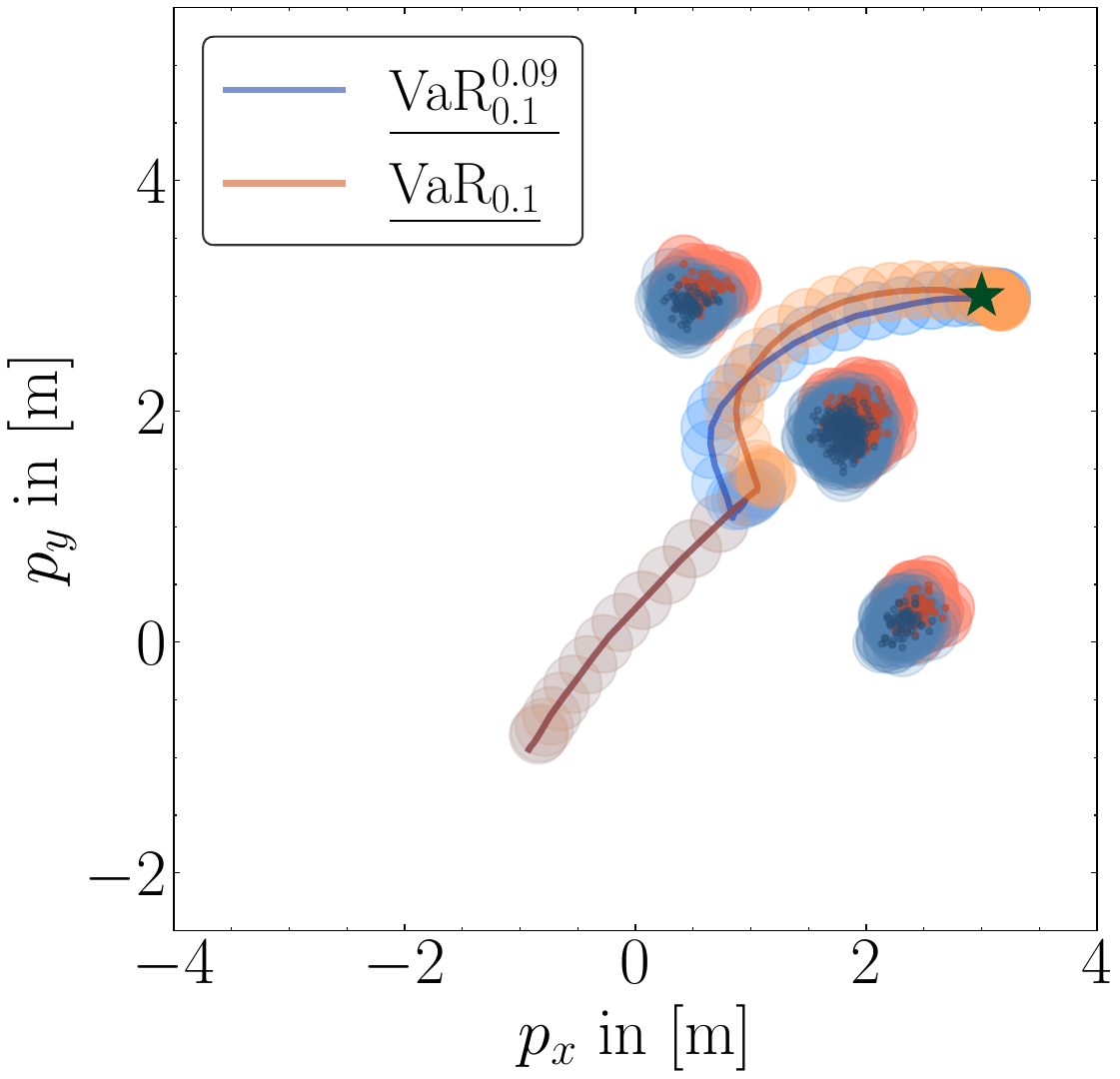} \label{fig:exp_colav_dist}}
\caption{(a): Snapshot of the collision avoidance scenario. The belief samples (red dots) are taken at time $t =\SI{0.76}{\second}$. (b): Snapshot illustrating a distributional shift. The true samples (dark blue dots) differ from those used by the controller (red dots), while both are also taken at time $t =\SI{0.76}{\second} $.}
\label{fig:exp_colav}
\vspace{-0.5cm}
\end{figure}

\subsubsection{Performance Comparison of Risk Measures}
\label{sec:exp_cv_bm}
We consider using both $\underline{\mathrm{VaR}_{\tau}}$, and $\underline{\mathrm{CVaR}_{\tau}}$, evaluated at $\tau = 0.1$ and $\delta = 0.05$. Additionally, we benchmark performance against the expected value $\underline{\mathbb{E}}$. We run 100 independent simulations for three different sample sizes: $N = 200$, $1000$, and $5000$. For each simulation run, we randomize both the initial state of the robot and the mean values of the Gaussian mixture distribution $p(\bm{o}_{t_0})$. The results of these experiments are summarized in Table~\ref{tb:cv_benchmark}. The number of successes, collisions, and timeouts is reported. $T_\mathrm{avg}$ represents the average computation time of~(\ref{eq:cbf-qp}) in milliseconds, including computing the parameters and solving the QP. Furthermore, we highlight a representative case and present its results in Fig.~\ref{fig:exp_colav_comp}. We can observe that using $\underline{\mathbb{E}}$ results in unsafe behavior as it fails to account for the tail events. While $\underline{\mathrm{CVaR}_{0.1}}$ ensures safety, it can be overly conservative. In contrast, $\underline{\mathrm{VaR}_{0.1}}$ offers a balance between safety and task achievement.


Table~\ref{tb:cv_benchmark} shows that $\mathbb{E}$ leads to a higher number of collisions. In contrast, the tail risk measures ensure safety, as they account for the skewed and multimodal nature, rather than relying solely on the mean. Furthermore, $\underline{\mathrm{VaR}_{0.1}}$ achieves more successes with no timeouts, indicating that it is less conservative than $\underline{\mathrm{CVaR}_{0.1}}$ while still ensuring safety.  We also notice that the controller employing $\underline{\mathrm{CVaR}_{\tau}}$ can be made less conservative by selecting a higher risk level $\tau$. However, this may compromise safety guarantees and reduce interpretability. Additionally, $\underline{\mathrm{VaR}_{\tau}}$ offers advantages due to its simpler mathematical formulation in Eq.~(\ref{eq:lower_var}), making it easier to implement and requiring less time to compute. Our implementation of BCBF with $\underline{\mathrm{VaR}_{\tau}}$ achieves a computation frequency of approximately \SI{1}{\kilo\hertz}, even for $5000$ samples. We compute the parameters of the QPs using JAX \cite{bradbury2018jax}.
The QPs are solved using ProxQP \cite{bambade2023proxqp} and its official implementations. All computations are performed on a laptop with an Intel Core i7-13700H CPU, 32 GB of RAM, and an NVIDIA RTX 4070 GPU with 8 GB of memory.


\subsubsection{Distributional Robustness}
\label{sec:exp_cv_dr}
While the object's estimated velocity is $[v_x, v_y]^\top = [-0.75, -0.75]^\top$, we consider a case where the true velocity stays to be the same direction, but is \SI{20}{\percent} faster. This model mismatch induces a distributional shift in the sample-based belief, as illustrated in Fig.~\ref{fig:exp_colav_dist}. To mitigate this issue, we employ the $\ell$-robust BCBF $\underline{\mathrm{VaR}_{0.1}^{\ell}}$. 
We set $\ell = 0.09$, as ${\ell\in[0,\tau)}$.
We choose $N=500$ and summarize the results in Table~\ref{tb:cv_dr}. We can observe that the robust BCBF effectively enhances safety, while collisions happen more frequently with the original BCBF.

\begin{table}[t]
\centering
\begin{threeparttable}  
    \caption{Benchmark Results with 100 Simulations}
    \label{tb:cv_benchmark}

\begin{tabular}{@{}clcccc@{}}
\toprule
$N$ & Method & Success & Collision & Timeout\tnote{*} & $T_{\text{avg}}$ (ms) \\ \midrule
& BCBF-\(\underline{\mathrm{VaR}_{0.1}}\) & \textbf{97/100} & 3/100 & \textbf{0/100} & \textbf{0.668} \\
200 & BCBF-\(\underline{\mathrm{CVaR}_{0.1}}\) & 91/100 & \textbf{0/100} & 9/100 & 0.708 \\
& BCBF-\(\underline{\mathbb{E}}\) & 46/100 & 54/100 & 0/100 & 0.710 \\ \midrule
& BCBF-\(\underline{\mathrm{VaR}_{0.1}}\) & \textbf{98/100} & 2/100 & \textbf{0/100} & \textbf{0.769} \\
1000 & BCBF-\(\underline{\mathrm{CVaR}_{0.1}}\) & 89/100 & \textbf{1/100} & 10/100 & 0.869 \\
\textbf{} & BCBF-\(\underline{\mathbb{E}}\) & 30/100 & 70/100 & 0/100 & 0.802 \\ \midrule
& BCBF-\(\underline{\mathrm{VaR}_{0.1}}\) & \textbf{99/100} & \textbf{1/100} & \textbf{0/100} & \textbf{1.146} \\
5000 & BCBF-\(\underline{\mathrm{CVaR}_{0.1}}\) & 94/100 & \textbf{1/100} & 5/100 & 1.554 \\
& BCBF-\(\underline{\mathbb{E}}\) & 28/100 & 72/100 & 0/100 & 1.586 \\ \bottomrule
\end{tabular}

    \begin{tablenotes}
        \item[*] A timeout occurs when the robot neither collides nor successfully reaches the target within a reasonable period, indicating that it might be stuck. We set the maximum simulation time to \SI{10}{s} in our experiment.
    \end{tablenotes}
\end{threeparttable}
\end{table}

\begin{table}[ht]
\begin{center}
    \caption{Performance Under Distributional Shift}
    \label{tb:cv_dr}
\begin{tabular}{@{}lccc@{}}
\toprule
Method                                       & \multicolumn{1}{l}{Success} & \multicolumn{1}{l}{Collision} & \multicolumn{1}{l}{Timeout} \\ \midrule
BCBF-$\underline{\mathrm{VaR}_{0.1}}$        & 89/100                      & 11/100                        & 0/100                       \\ \midrule
BCBF-$\underline{\mathrm{VaR}_{0.1}^{0.09}}$ & \textbf{99/100}                       & \textbf{1/100}                          & 0/100                       \\ \bottomrule
\end{tabular}
\end{center}
\vspace{-0.7cm}
\end{table}


\section{Conclusion}
This paper presented a BCBF framework for ensuring the safety of robots operating in dynamic and uncertain environments. By incorporating provable concentration bounds on tail risk measures, the proposed method works effectively under sample-based beliefs of the environment and stochastic dynamics. Experimental validation in simulated underwater scenarios demonstrates real-time feasibility and robust collision avoidance under distribution shifts. 
We simplify the scenario by modeling the robot as a planar unicycle, which serves as an illustrative example. In future work, we aim to extend the approach to higher-order CBFs and underwater robots operating in 3D.


\section*{Appendix}
\subsection{Proof of the Lemma~\ref{lm:non_smooth}}
\label{ap:proof_nszcbf}
\begin{proof}
Note $h(\bm{x},\bm{o})$ evolves under $    {\mathrm{d}h \;=\; \bar{\mu}\,\mathrm{d}t \;+\; \bm{\bar{\sigma}}\,\mathrm{d}\bm{\bar{w}}}$,
where $\mathrm{d}\bm{\bar{w}}=\begin{bmatrix}
    \mathrm{d}\bm{z} & \mathrm{d}\bm{w}
\end{bmatrix}^\top$. By It\^{o}'s lemma \cite[Lemma 1]{clark2021control}, we have
\begin{equation*}
    \begin{aligned}
        \bar{\mu} =& \frac{\partial h}{\partial \bm{x}} \left(\bm{f}(\bm{x}) + \bm{g}(\bm{x})\bm{u}\right) 
        + \frac{1}{2} \operatorname{tr} 
        \left( \bm{\sigma}^\top \frac{\partial^2 h}{\partial \bm{x}^2} \bm{\sigma} \right) \\
        & + \frac{\partial h}{\partial \bm{o}} \bm{\xi}(\bm{o})
        + \frac{1}{2} \operatorname{tr} 
        \left( \bm{d}^\top \frac{\partial^2 h}{\partial \bm{o}^2} \bm{d} \right), \\
        \bm{\bar{\sigma}} =& 
        \begin{bmatrix} \frac{\partial h}{\partial \bm{x}}\,\bm{\sigma}(\bm{x})  & \frac{\partial h}{\partial \bm{o}}\,\bm{d}(\bm{o}) \end{bmatrix}.
    \end{aligned}
\end{equation*}
We define the auxiliary function $B=1/h$. Then by It\^{o}'s lemma, the drifting term $\hat{\mu}$ of $B$ is
\begin{equation*}
    \hat{\mu} = \frac{\partial B}{\partial h} \bar{\mu} 
    + \frac{1}{2} \lVert \bm{\bar{\sigma}} \rVert_2^2 \frac{\partial^2 B}{\partial h^2}
    = -h^{-2} \bar{\mu} + h^{-3} \lVert \bm{\bar{\sigma}} \rVert_2^2.
\end{equation*}
Multiplying~(\ref{ineq:nszcbf}) by $-1/h^2$ and substitute in $\bar{\mu}$ and $\bm{\bar{\sigma}}$ yield $\hat{\mu} \leq \alpha (h (\bm{x}, \bm{o}))$. Plugging in the expression for $\hat{\mu}$ in terms of $\bm{x}$ and $\bm{o}$, we have
\begin{equation*}
\begin{aligned}
    & \frac{\partial B}{\partial \bm{x}} (f(\bm{x}) + g(\bm{x}) \bm{u}) 
    + \frac{1}{2} \operatorname{tr} \left[ \bm{\sigma}^\top \frac{\partial^2 B}{\partial \bm{x}^2} \bm{\sigma} \right] \\
    & + \frac{\partial B}{\partial \bm{o}} \bm{\xi}(\bm{o})
        + \frac{1}{2} \operatorname{tr} 
        \left( \bm{d}^\top \frac{\partial^2 B}{\partial \bm{o}^2} \bm{d} \right) 
     \leq \alpha \left(h (\bm{x})\right).
\end{aligned}
\end{equation*}
Hence, $B$ is a non-smooth stochastic CBF  \cite[Definition 3]{vahs2024non} with $\alpha_1(h)=\alpha_2(h)=h$, ensuring the forward invariance.
\end{proof}




\section*{Acknowledgement}
We thank Ignacio Torroba, Joris Verhagen, Diego Martinez, Yifei Dong, Ming Li, Yixi Cai, Hanqing Xiang, Luyao Zhang, and the Planiacs for the helpful discussions.


\bibliographystyle{IEEEtran}
\bibliography{refs.bib}

\end{document}